\newcommand{\tikzmark}[1]{\tikz[overlay,remember picture] \node (#1) {};}
\newcommand{\DrawBox}[4][]{%
    \tikz[overlay,remember picture]{%
        \coordinate (TopLeft)     at ($(#2)+(-0.2em,0.9em)$);
        \coordinate (BottomRight) at ($(#3)+(0.2em,-0.3em)$);
        \path (TopLeft); \pgfgetlastxy{\XCoord}{\IgnoreCoord};
        \path (BottomRight); \pgfgetlastxy{\IgnoreCoord}{\YCoord};
        \coordinate (LabelPoint) at ($(\XCoord,\YCoord)!0.5!(TopLeft)$);
        \draw [red,#1] (TopLeft) rectangle (BottomRight);
        \node [left, #1, fill=none, fill opacity=1] at (LabelPoint) {#4};
    }
}
\newcommand{\DrawLowBox}[4][]{%
    \tikz[overlay,remember picture]{%
        \coordinate (TopLeft)     at ($(#2)+(-0.2em,0.9em)$);
        \coordinate (BottomRight) at ($(#3)+(0.2em,-0.3em)$);
        \path (TopLeft); \pgfgetlastxy{\XCoord}{\IgnoreCoord};
        \path (BottomRight); \pgfgetlastxy{\IgnoreCoord}{\YCoord};
        \coordinate (LabelPoint) at ($(\XCoord,\YCoord)!0.25!(TopLeft)$);
        \draw [red,#1] (TopLeft) rectangle (BottomRight);
        \node [left, #1, fill=none, fill opacity=1] at (LabelPoint) {#4};
    }
}
\newtheorem{lemma}{Lemma}
\newtheorem{theorem}{Theorem}
\newtheorem{corollary}{Corollary}
\newtheorem{remark}{Remark}
\newcounter{THM:2-connected}
\newcounter{THM:arbitrary}
\title{C-Planarity of Overlapping Clusterings Including Unions of Two Partitions\footnote{This research has received funding from the European Research Council under the European Union’s Seventh Framework Programme (FP7/2007-2013) / ERC grant agreement no. 319209.}}
\author{Jan C. Athenst\"adt and Sabine Cornelsen \\ University of Konstanz}
\begin{document}

\maketitle

\begin{abstract}
  We show that clustered planarity with overlapping clusters as
  introduced by Didimo et al.~\cite{didimo_etal:2008} can be solved in
  polynomial time if each cluster induces a connected subgraph. It can
  be solved in linear time if the set of clusters is the union of two
  partitions of the vertex set such that, for each cluster, both the
  cluster and its complement, induce connected subgraphs. Clustered
  planarity with overlapping clusters is NP-complete, even if
  restricted to instances where the underlying graph is 2-connected,
  the set of clusters is the union of two partitions and each cluster
  contains at most two connected components while their complements
  contain at most three connected
  components~\cite{athenstaedt_etal:gd2014}.
\end{abstract}

\section{Introduction}\label{SEC:introduction} 

An \emph{(overlapping) clustered graph} $(G=(V,E),\mathcal C)$
consists of an undirected graph $G$ and a set $\mathcal C$ of
\emph{clusters}, i.e., of subsets of the vertex set $V$. A vertex may
be contained in several clusters. Moreover, clusters may overlap,
i.e., there might be $C_1,C_2 \in \mathcal C$ with
$C_1 \cap C_2 \neq \emptyset$, $C_1 \not\subseteq C_2$, and
$C_2 \not\subseteq C_1$.  Didimo et al.~\cite{didimo_etal:2008}
defined planarity for overlapping clustered graphs geometrically: An
overlapping clustered graph $(G=(V,E),\mathcal C)$ is clustered planar
if the vertices can be represented by distinct points, each edge
$e \in E$ by a curve $R(e)$, and each
%
cluster $C \in \mathcal C$ by a simple closed region
$R(C)$ in the plane such that for $X,Y \in E \cup \mathcal C$ we have that
(i) $X \subset R(X)$, $(V\setminus X) \cap R(X) = \emptyset$, 
(ii) $R(X) \subseteq R(Y)$ if $X \subseteq Y$, and
(iii) every connected region of $R(X) \cap R(Y)$ contains a vertex.
%
%
%
E.g., the clustered graph in Fig.~\ref{FIG:ex1} is clustered
planar while the clustered graph in Fig.~\ref{FIG:ex2} is not.

Clustered planarity is NP-complete in general as shown
in~\cite{johnson/pollak:87}, where the case with $E = \emptyset$ is
examined. In \cite{didimo_etal:2008}, it was posed as an open question
whether clustered planarity is polynomial-time solvable for
overlapping clustered graphs if each cluster induces a connected
subgraph. We will answer this question in the affirmative.



\begin{figure}
  \begin{minipage}[b]{.45\linewidth}
  \begin{center}
     \includegraphics[page=2]{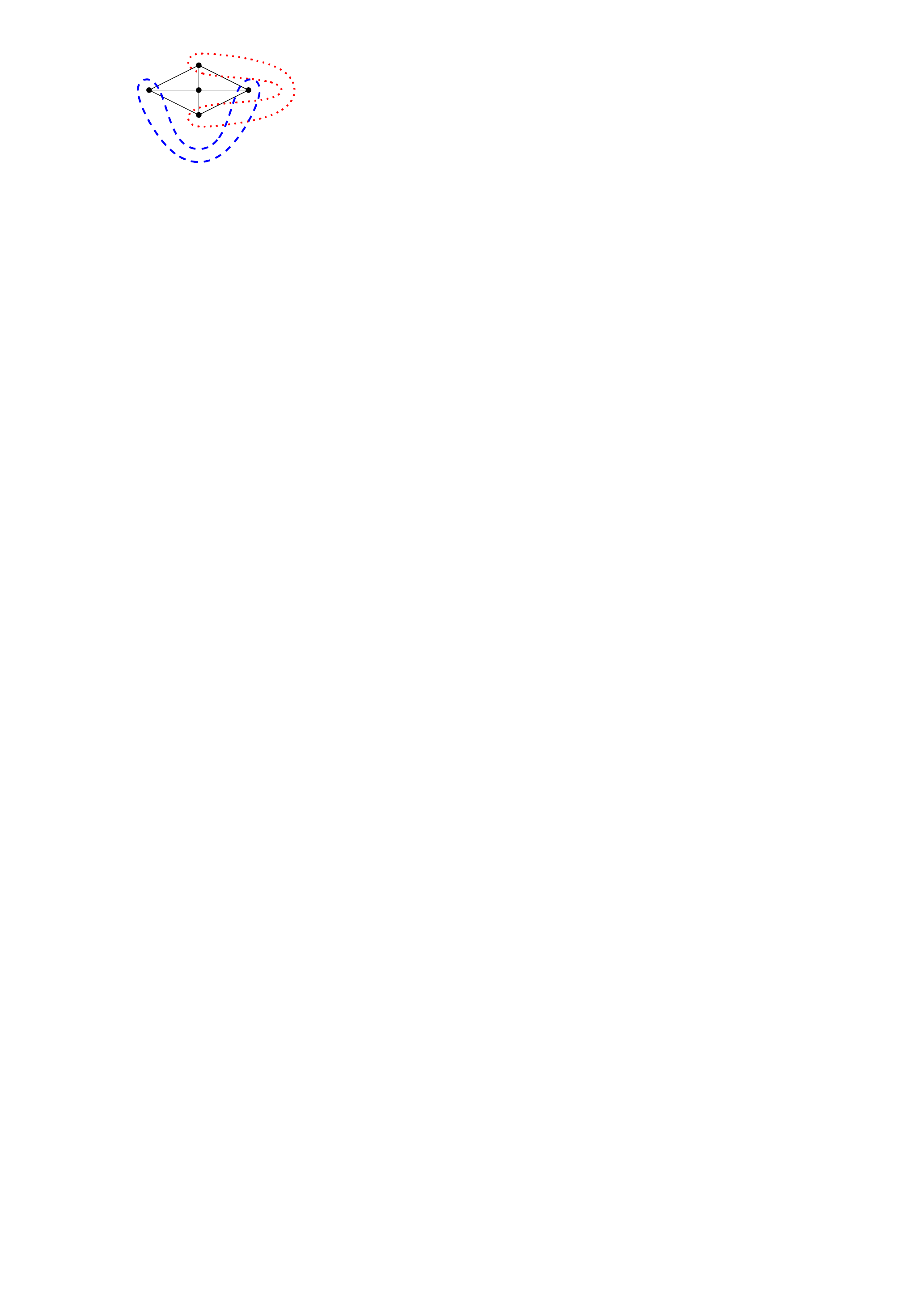}%
   \end{center}
     \subcaption{clustered planar}\label{FIG:ex1}
  \end{minipage}
  \begin{minipage}[b]{.45\linewidth}
     \centering{\includegraphics[page=1]{intro}}%
     \subcaption{not clustered planar}\label{FIG:ex2}
  \end{minipage}
\caption{Two graphs with two clusters each (set of vertices enclosed by red dotted and blue dashed curve, respectively)}
\end{figure}
If the clustering is hierarchical, i.e., if any two clusters in
$\mathcal C$ are either disjoint or one is contained in the other then
clustered planarity is the classical problem of c-planarity as
considered in~\cite{feng_etal:esa95}.  One of the most important open
problems in the field of Graph Drawing is the complexity of
c-planarity of hierarchically clustered graphs. An overview on the
classical c-planarity problem can be found
in~\cite{cortese_etal:2008,patrignani:gdhandbook,blaesius/rutter:2016}.
Dahlhaus~\cite{dahlhaus:latin98} and later Cortese et
al.~\cite{cortese_etal:2008} showed, that c-planarity of
hierarchically clustered graphs can be solved in linear time if each
cluster induces a connected subgraph. Their approaches make use of the
decomposition of the graph into 3-connected components as represented
by BC- and SPQR-trees%
. 

Angelini et al.~\cite{angelini_etal:cg2015} defined drawings with
region-region crossings of hierarchically clustered graphs. These are
essentially representations by points and regions such that all
conditions of clustered planarity are fulfilled except for Condition
(iii) when $X$ and $Y$ are both clusters. E.g., Fig.~\ref{FIG:ex2}
shows a drawing of a clustered graph with one region-region crossing.
Observe that the intersection of the two regions does not contain a
vertex as required by the definition of clustered planarity.  Angelini
et al.~\cite{angelini_etal:cg2015} showed how to use SPQR-trees to
test in polynomial time whether any hierarchically clustered graph
with an underlying 2-connected graph has a drawing with region-region
crossings.

If $E = \emptyset$ then clustered planarity is closely
related to the NP-complete problem of hypergraph (vertex) planarity as defined
in~\cite{johnson/pollak:87}: Given a set $\mathcal C$ of subsets of a
set $V$, is there a planar support, i.e., a planar graph $G=(V,E)$
such that each set in $\mathcal C$ induces a connected subgraph of
$G$. Various subclasses of planar supports that directly imply
clustered planarity~--~such as trees, cacti, and outerplanar supports~--~were
considered~\cite{bixbi/wagner:1988,vanBeveren_etal:gd2016,brandes_etal:iwoca10outer,brandes_etal:2011path,buchin_etal:2011,kaufmann_etal:gd2008}.
Hypergraph planarity remains NP-complete even if $\mathcal C$ is the
union of two partitions~\cite{athenstaedt_etal:gd2014}.
%
%
%
%
The proof in \cite{athenstaedt_etal:gd2014} even shows that clustered
planarity remains NP-complete if the underlying graph $G$ is
2-connected, $\mathcal C$ is the union of two partitions, each cluster
contains at most two connected components, and the complement of any
cluster 
at most three connected components.

\textbf{Contribution:} 
In Sect.~\ref{SEC:co-connected} we focus on the
union of two partitions. We further require that for each cluster,
both the cluster itself and its complement are connected. Different
from hierarchical clusterings, this connectivity property does not
automatically imply clustered planarity in the overlapping case. Yet,
for the union of two partitions, we can give a characterization that
yields a linear-time testing algorithm. Finally, in
Sect.~\ref{SEC:c-connected-SPQR} and \ref{SEC:c-connected-BC}, we show
how to use BC-trees, SPQR-trees and the consecutive-ones property to
obtain an algorithm for testing clustered planarity of possibly
overlapping but connected clusters. The run time of the algorithm is
polynomial in $|V|$ and $|\mathcal C|$.
 
\section{Preliminaries}\label{SEC:preliminaries}

For a subset $C \subseteq V$ of the vertices of an undirected graph
$G=(V,E)$, we denote by $G[C]$ the subgraph of $G$ induced by $C$,
i.e.\ the graph with vertex set $C$ and edge set $\{e \in E;\; e
\subseteq C\}$. A $C$-path ($C$-cycle) is a path (simple
cycle) in $G[C]$.  A \emph{partition} of $V$ is a set
$\mathcal P$ of subsets of $V$ such that each vertex in $V$ is
contained in exactly one set in $\mathcal P$. For two partitions
$\mathcal P_B = \{B_1, \ldots , B_{\ell_B}\}$ and $\mathcal P_R =
\{R_1, \ldots , R_{\ell_R}\}$ of $V$, we define the \emph{intersection
  partition} $\mathcal P_I = \{B_i \cap R_j; i = 1, \ldots , \ell_B, j
= 1, \ldots , \ell_R\}$.  The \emph{connected intersection partition}
of $\mathcal P_B$ and $\mathcal P_R$ is the partition induced by the
connected components of $G[C], C \in \mathcal P_I$. 

A \emph{consecutive-ones ordering} of a binary matrix is a permutation
of its columns such that in each row all of the 1s are consecutive,
i.e. such that each row is of the form $0^*1^*0^*$.  A binary matrix
has the \emph{consecutive-ones property} if and only if it has a
consecutive-ones ordering. It can be tested in linear time whether a
binary matrix has the consecutive-ones property and a consecutive-ones
ordering can be found in linear time if it
exists~\cite{booth/lueker:1976}.

\subsection{Planarity of Overlapping Clustered Graphs}

Let $(G=(V,E),\mathcal C)$ be
an overlapping clustered graph. Let $\ell(\mathcal C) = \sum_{C \in \mathcal
  C}|C|$ be the total size of all clusters. The clustered graph $(G,\mathcal{C})$ is
\emph{c-connected} if $G[C]$ is connected for all $C \in
\mathcal{C}$ and \emph{c-co-connected} if both, $G[C]$
and $G[V \setminus C]$, are connected for all $C \in \mathcal{C}$.

If $(G,\mathcal{C})$ is c-connected then a \emph{c-planar embedding} of
$G$ for $\mathcal C$ is a planar embedding of $G$ such that $V
\setminus C$ is in the outer face of $G[C]$ for all $C \in
\mathcal{C}$.
A graph $G^+=(V,E^+)$ is a \emph{c-planar support} of a clustered
graph $(G=(V,E),\mathcal C)$ if $E \subseteq E^+$, $(G^+,\mathcal{C})$
is c-connected and there is a c-planar embedding of $G^+$ for
$\mathcal C$. A clustered graph is \emph{c-planar} if and only if it
has a c-planar support. 

It was shown that a c-connected clustered
graph~\cite{didimo_etal:2008} or a hierarchically clustered
graph~\cite{feng_etal:esa95}, respectively, is clustered
planar in the sense of \cite{didimo_etal:2008} if and only if it has a
c-planar support.


\subsection{BC-Trees}

A vertex $v$ is a \emph{cut vertex} of a connected graph $G$ if the
graph that results from $G$ by deleting $v$ and its incident edges is
not connected. A connected graph is \emph{2-connected} if it contains more than
two vertices but no cut vertices. The \emph{blocks} of a connected graph
are the maximal 2-connected subgraphs and the subgraphs induced by
bridges. The vertices of the \emph{block--cut tree (BC-tree)} of
a graph $G$ are the blocks and the cut vertices of $G$. There is an edge
in the block--cut tree between a block $H$ and a cut vertex $v$
if $v$ is contained in $H$.

\subsection{SPQR-Trees}

Two vertices $v$ and $w$ are a \emph{separation pair} of a 2-connected
graph $G$ if the graph that results from $G$ by deleting $v$ and $w$
and their incident edges is not connected.  A graph is
\emph{3-connected} if it contains more than three vertices but no
separation pair.  An
\emph{SPQR-tree}~\cite{dibattista/tamassia:icalp90} is a labeled tree
that represents the decomposition of a 2-connected graph into
3-connected components.  Each node $\nu$ of an SPQR-tree is labeled with
a multi-graph skel$(\nu)$ -- called the \emph{skeleton} of $\nu$. There are
four different types of labels with the skeletons:
\emph{S-nodes} for simple cycles, 
\emph{P-nodes} for three or more parallel edges,
\emph{R-nodes} for a simple 3-connected graph, and
\emph{Q-nodes} for two parallel edges.

The Q-nodes are the leaves of an SPQR-tree. No two S-nodes, nor two
P-nodes are adjacent in an SPQR-tree. For each node $\nu$ of an
SPQR-tree there is a one-to-one correspondence of the edges of
skel$(\nu)$ and the edges incident to $\nu$ (except for the Q-nodes
where one of the two edges of the skeleton corresponds to the only
incident edge of the Q-node). The edge of skel$(\nu)$ corresponding to
the edge $\{\nu,\mu\}$ of the SPQR-tree is denoted by $e_\mu$.  We
consider the edges of the skeletons oriented. For simplicity, we
assume that the edges of the skeleton of an S-node are oriented as a
directed cycle and the edges of the skeleton of a P-node are all
oriented in parallel.


We consider the SPQR-tree $T$ rooted at a Q-node $r$. Let $\nu$ be a
node of $T$. The \emph{root edge} of skel$(\nu)$ is the edge that
corresponds to the parent edge of $\nu$. The \emph{poles} of skel$(\nu)$ 
(or node $\nu$, respectively)
are the end vertices of the root edge.   Let skel$^-(\nu)$ be the
skeleton of $\nu$ without the root edge. 
%
Each node $\nu$ of the rooted SPQR-tree represents a (multi-)graph
$G_r(\nu)$: The Q-nodes (excluding the root) represent a graph with
two vertices connected by an edge and additionally by the root
edge. Let $\nu$ be a non-leaf node of an SPQR-tree and let
$\nu_1,\dots,\nu_k$ be the children of $\nu$. For $i=1,\dots,k$,
remove the edge associated with $\{\nu,\nu_i\}$ from both skel$(\nu)$
and $G_r(\nu_i)$. Insert the remaining parts of $G_r(\nu_i)$ into
skel$(\nu)$ identifying the poles of $G_r(\nu_i)$ with its counter
parts in skel$(\nu)$.
The \emph{poles} of $G_r(\nu)$ are the poles of $\nu$. 
Let $G^-(\nu)$ be $G(\nu)$
without the root edge of skel$(\nu)$.
The edges of $G_r(r)$ correspond to the Q-nodes of the SPQR-tree.

Every 2-connected graph is represented by a unique SPQR-tree (up to
the choice of the root) and the SPQR-tree of a 2-connected graph can
be constructed in linear time~\cite{gutwenger/mutzel:gd2000}.

\section{Two C-Co-Connected Partitions}\label{SEC:co-connected} 

In this section we show that c-planarity of a c-co-connected clustered
graph can be tested in linear time if the set of clusters is the union
of two partitions. Observe that in contrast to the hierarchical
case~\cite{cornelsen/wagner:wg2003}, there are c-co-connected clustered
graphs with an underlying planar graph that are not c-planar. 
  E.g., the graph $G=(V,E)$ in Fig.~\ref{FIG:co-connected_not_c-planar} is
  3-connected and, thus, has a unique embedding up to the choice of
  the outer face. No matter which face we choose as the outer face,
  there is always at least one cluster $C$ among the four clusters in
  $\mathcal P_B \cup \mathcal P_R$ such that $G[C]$ contains a simple
  cycle enclosing a vertex in $V \setminus C$.
\begin{figure}[t!]
     \centering{\includegraphics[page=7]{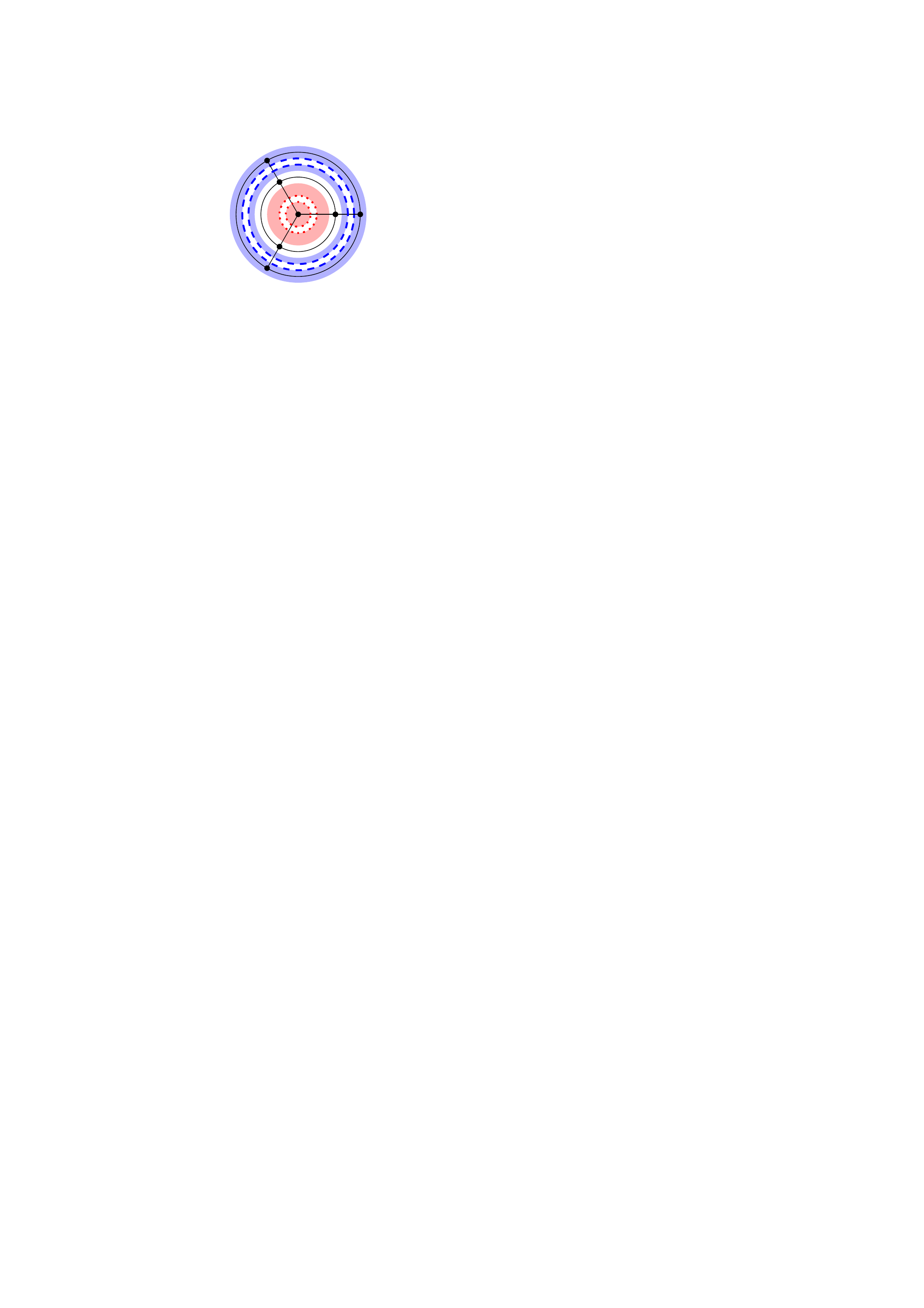}}%
\caption{$\mathcal P_B$ contains the  two clusters separated by the blue
  (dashed) curve, $\mathcal P_R$ contains the two clusters
  separated by the red (dotted) curve. $(G, \mathcal
  P_B \cup \mathcal P_R)$ is c-co-connected and $G$ is planar but $(G,
  \mathcal P_B \cup \mathcal P_R)$ is not c-planar.}
\label{FIG:co-connected_not_c-planar}
\end{figure}

The key for the algorithm is the following characterization.

\begin{theorem}\label{THEO:2partitions}
  Let $G=(V,E)$ be a graph and let $\mathcal P_R$ and $\mathcal P_B$
  be two partitions of $V$ such that the clustered graph $(G,\mathcal
  P_R \cup \mathcal P_B)$ is c-co-connected. Let $\mathcal P'_I$ be the
  connected intersection partition of $\mathcal P_R$ and $\mathcal
  P_B$. Then $(G,\mathcal P_R \cup \mathcal P_B)$ is c-planar if and
  only if $(G,\mathcal P_I')$ is c-planar.
\end{theorem}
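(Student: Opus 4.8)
The statement is an "if and only if", and I expect the substantive direction to be that c-planarity of the refined instance $(G,\mathcal P_I')$ implies c-planarity of the coarser instance $(G,\mathcal P_R \cup \mathcal P_B)$ — the other direction should be nearly immediate. Let me think about each.

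For the easy direction ($(G,\mathcal P_R\cup\mathcal P_B)$ c-planar $\Rightarrow$ $(G,\mathcal P_I')$ c-planar): I would take a c-planar support $G^+$ witnessing c-planarity of the two-partition instance, i.e.\ a planar graph with $E\subseteq E^+$, with every $R_j$ and every $B_i$ inducing a connected subgraph, and with a planar embedding in which $V\setminus C$ lies in the outer face of $G^+[C]$ for each cluster $C$. I claim the *same* $G^+$ with the *same* embedding is a c-planar support for $\mathcal P_I'$. The parts of $\mathcal P_I'$ are the connected components of $G[B_i\cap R_j]$; but I actually need each such part to induce a connected subgraph of $G^+$ and to have its complement in the outer face. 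Connectivity of $G^+$ restricted to a part is not automatic — a part of $\mathcal P_I'$ is a component of $G[B_i\cap R_j]$, not of $G^+[B_i\cap R_j]$. So even this direction needs an argument: one shows that in a c-planar embedding of $G^+$, the region $R(B_i)\cap R(R_j)$ is a disjoint union of simple-closed regions each containing exactly the vertices of one $G^+[B_i\cap R_j]$-component, and one refines $G^+$ (adding edges inside each such region along its boundary) so that each piece becomes connected; the pieces of $G^+[B_i\cap R_j]$ are unions of pieces of $G[B_i\cap R_j]$, so after refining by $G[B_i\cap R_j]$-components first one still has a valid support. I would phrase this via a lemma: a c-connected support for a clustering remains a support after replacing any cluster $C$ by the vertex sets of the connected components of $G[C]$, provided $G[C]$'s components are "nested correctly" in the embedding — which c-co-connectedness and planarity guarantee.

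For the main direction ($(G,\mathcal P_I')$ c-planar $\Rightarrow$ $(G,\mathcal P_R\cup\mathcal P_B)$ c-planar): start from a c-planar support $H$ for the connected intersection partition $\mathcal P_I'$, with its c-planar embedding. The task is to produce a support for the coarser clustering. The natural idea is to add edges to $H$ so that each $B_i$ and each $R_j$ becomes connected, without destroying planarity and without violating the outer-face condition for any cluster. Fix a colour, say blue: the parts of $\mathcal P_I'$ contained in $B_i$ are the components of $G[B_i]$ partitioned further by red classes; call these the *blue pieces of type $i$*. In the embedding of $H$ each such piece sits inside its own closed region; I want to connect all the type-$i$ pieces into one connected blue region. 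Here is where c-co-connectedness is essential: $G[V\setminus B_i]$ is connected, so the "red/uncoloured" complement forms a single blob, and dually $G[B_i]$ connected means the original pieces weren't too scattered. I would argue that in the planar embedding the faces of $H$ can be 2-coloured / that there is a spanning structure letting me route new edges between consecutive blue pieces of type $i$ through faces that are "blue-only or neutral", i.e.\ not separated from each other by a region that must stay monochromatic in another colour. Symmetrically for red. The delicate point is doing this for *both* partitions simultaneously: an edge added to connect two blue pieces must not be forced to cross a red region boundary, and vice versa.

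The main obstacle — and I think the real content of the theorem — is exactly this simultaneous routing, and I expect the proof to handle it by exploiting the structure of the refined embedding rather than by brute force. Concretely, I would look at the planar embedding of $H$, contract each part of $\mathcal P_I'$ to a single vertex to get a planar "pattern graph", and observe that c-co-connectedness forces this pattern graph to have a very restricted form: the blue pieces of each type $i$ must appear in the embedding in a way that lets them be joined along the boundary of the complementary region $R(V\setminus B_i)$, which is a single region because $G[V\setminus B_i]$ is connected. One adds the connecting edges along these boundaries; for blue one routes inside $R(B_i)$-adjacent territory, for red inside $R(R_j)$-adjacent territory, and the fact that a blue class and a red class intersect in a union of full parts of $\mathcal P_I'$ means the two routing schemes can be made edge-disjoint and non-crossing. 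I would also need to double-check that adding these edges does not create, for some *other* cluster $C'$, a $C'$-cycle enclosing an outside vertex — this follows because every new edge lies in a region adjacent to a single part, hence inside $R(C')$ or outside $R(C')$ but never straddling. Finally I would remark that the theorem reduces c-planarity of a c-co-connected union of two partitions to c-planarity of a c-connected instance whose clusters are pairwise either disjoint or crossing in a controlled way, setting up the linear-time algorithm promised in the section.
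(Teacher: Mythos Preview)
Your plan misidentifies the central difficulty. Because $(G,\mathcal P_R\cup\mathcal P_B)$ is c-co-connected, $G[C]$ is already connected for every $C\in\mathcal P_R\cup\mathcal P_B$; and $(G,\mathcal P_I')$ is c-connected by construction (its parts are connected components of induced subgraphs of $G$). Hence in both directions one may take $G$ itself as the c-planar support, and the question reduces to whether $G$ admits a c-planar \emph{embedding} for the respective clustering. Your proposed edge-additions ``so that each $B_i$ and each $R_j$ becomes connected'' are therefore vacuous, and the simultaneous-routing problem you spend most of your effort on never arises. (Your worry in the forward direction is similarly misplaced: the parts of $\mathcal P_I'$ are connected in $G$, hence in any supergraph of $G$.)

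What you do not address is the actual obstruction in the backward direction: given a c-planar embedding of $G$ for $\mathcal P_I'$, why should $V\setminus C$ lie in the \emph{outer} face of $G[C]$ for every $C\in\mathcal P_R\cup\mathcal P_B$? C-co-connectedness guarantees that $V\setminus C$ lies in a \emph{single} face of $G[C]$, but possibly an inner one. The paper resolves this by an extremal argument: among all c-planar embeddings of $G$ for $\mathcal P_I'$, choose one minimizing the number of ``bad'' clusters (those with $V\setminus C$ in an inner face of $G[C]$). If some $C\in\mathcal P_B$ is bad, one exhibits a face $f_0$ of $G$ that lies inside the face of $G[C]$ containing $V\setminus C$ and simultaneously in the outer face of $G[C']$ for some $C'\in\mathcal P_R$; c-planarity for $\mathcal P_I'$ is used precisely here, to rule out a cycle in $G[C\cap C']$ that would block the choice of $f_0$. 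Re-declaring $f_0$ to be the outer face strictly decreases the number of bad clusters, a contradiction. Your plan contains no analogue of this outer-face selection, so as written it does not establish the theorem.
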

\begin{proof}
We first show that \emph{if $(G, \mathcal P_B \cup \mathcal P_R)$ is c-connected and c-planar then $(G, \mathcal P'_I)$
  is c-planar}: Consider a c-planar embedding of $G$ for $\mathcal P_R \cup \mathcal
  P_B$.  Let $C \in \mathcal P'_I$, $R \in \mathcal P_R$, and $B
  \in \mathcal P_B$ with $C \subseteq B \cap R$. Assume there is a
  vertex $v$ in an inner face of $G[C]$. Observe that by c-planarity
  of $(G,\mathcal P_R\cup \mathcal P_B)$, all vertices in the inner
  faces of $G[C]$ are in $B \cap R$. Since $G[B]$ and $G[R]$ are
  connected there must be a path from $v$ to $C$ that contains only
  vertices in one inner face of $G[C]$ and thus in $B \cap R$. Therefore,
  $v$ and $C$ are in the same connected component of $G[B \cap
  R]$ and hence $v \in C$.

  \begin{window}[6,r,\includegraphics[page=6]{negative_examples},]
    We now show that \emph{if $(G, \mathcal P_B \cup \mathcal P_R)$ is
      c-co-connected and $(G, \mathcal P'_I)$ is c-planar then $(G,
      \mathcal P_B \cup \mathcal P_R)$ is c-planar}: Let $C \in
    \mathcal P_B \cup \mathcal P_R$ (squared blue vertices in the
    drawing). c-co-connected implies that $V \setminus C$ is contained
    in one face $f$ of $G[C]$ (shaded area in the drawing). We call
    $C$ \emph{bad} if $V \setminus C$ is contained in an inner face of
    $G[C]$. Among all planar embeddings of $G$ that are c-planar for
    $\mathcal P_I'$ choose one that minimizes the number of bad
    clusters. Assume there is a bad $C \in \mathcal P_B \cup \mathcal
    P_R$. We assume without loss of generality that $C \in \mathcal
    P_B$.  We show that this would yield a contradiction to the choice
    of the embedding.

    \indent
    If $C$ is the union of some clusters in $\mathcal P_R$, choose a
    face $f_0$ of $G$ inside $f$ incident to a vertex of $C$ as the
    outer face, decreasing the number of bad clusters.  
    
    \indent
    Otherwise, let $C' \in \mathcal P_R$ intersect $C$ and $V
    \setminus C$.
    Since $G[C']$ is connected, $E(C' \cap C, C' \setminus C)$ is not
    empty.  There must even be an edge $e \in E(C' \cap C, C'
    \setminus C)$ that is in the outer face of $G[C']$: Otherwise $G[C
    \cap C']$ would enclose $C' \setminus C$, i.e., there is a cycle
    $c$ in $G[C \cap C']$ with a vertex in $C' \setminus C$ in its
    inside.  Thus $f$ is contained in the region bounded by
    $c$. However, $c$ is contained in a connected component of $C \cap
    C'$. This contradicts the fact that $(G,\mathcal P_I')$ is
    c-planar.
    Let now $f_0$ be a face of $G$ incident to $e$ in the outer face
    of $G[C']$.
    
    \indent Now $f_0$ is incident to a vertex of the outer face of
    both a graph induced by a cluster in $\mathcal P_R$ and a graph
    induced by cluster in $\mathcal P_B$.  Thus, $f_0$ is not
    contained in any cluster.  Choosing $f_0$ as the outer face
    decreases the number of bad clusters. This contradicts that we
    have chosen a planar embedding minimizing the number of bad
    clusters. \qedhere
    \end{window}
\end{proof}

Since c-planarity for c-connected hierarchically clustered graphs can
be tested in linear time~\cite{cortese_etal:2008}, it remains to show
that $\mathcal P_I'$ can be constructed in linear time. Since
connected components can be computed in linear time it suffices to
show that the intersection partition $\mathcal P_I$ of two partitions
$\mathcal P_B = \{B_1, \ldots , B_{\ell_B}\}$ and $\mathcal P_R =
\{R_1, \ldots , R_{\ell_R}\}$ of $V$ can be computed in linear time.
We introduce the following data structure: For $X \in \{B,R\}$, we use
a vertex array with $X[v]=i$ for $v \in X_i$. We also initialize an
array $S[1,\dots,\lambda_R]$ of stacks, where $S[i]$ will contain the
vertices of $R_i$, $i=1,\dots,\lambda_R$ in the order in which they
appear in $B_1,\dots,B_{\lambda_B}$. We fill the stacks as follows:
For $i=1,\dots,\lambda_B$ and $v \in B_i$, we push $v$ to $S[R[v]]$.
Now, the sets in $\mathcal P_I$ can be obtained by going through the
stacks and opening a new set whenever $B[v]$ changes.
This concludes the proof of the following theorem:

\begin{theorem}
  It can be tested in linear time whether a c-co-connected clustered
  graph is c-planar if the set of clusters is the union of two
  partitions of the vertex set.
\end{theorem}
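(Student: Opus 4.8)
The plan is to combine Theorem~\ref{THEO:2partitions} with the known linear-time algorithm for c-connected hierarchically clustered graphs, so the only real work is showing that the reduction can be carried out in linear time. Given a c-co-connected clustered graph $(G,\mathcal C)$ whose cluster set is the union of two partitions $\mathcal P_B$ and $\mathcal P_R$, I would first compute the connected intersection partition $\mathcal P_I'$, and then observe that $\mathcal P_I'$ is a hierarchical (in fact, partition) clustering that is automatically c-connected, so I may invoke the algorithm of Cortese et al.~\cite{cortese_etal:2008} to decide c-planarity of $(G,\mathcal P_I')$; by Theorem~\ref{THEO:2partitions} this answers the question for $(G,\mathcal P_B\cup\mathcal P_R)$.

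**Next I would** spell out the construction of $\mathcal P_I'$ in linear time. Since connected components of all the $G[C]$, $C\in\mathcal P_I$, can be found in total linear time by a single graph traversal, it suffices to produce the intersection partition $\mathcal P_I=\{B_i\cap R_j\}$ in linear time, i.e.\ in time $O(|V|+|E|)$ rather than the naive $O(\ell_B\cdot\ell_R)$. I would use the bucketing data structure sketched after the statement of Theorem~\ref{THEO:2partitions}: store for each vertex $v$ the indices $B[v]$ and $R[v]$ of the blocks containing it, create an array of $\ell_R$ initially empty stacks, push each $v\in B_1,\dots,B_{\ell_B}$ (processed in this order) onto stack $S[R[v]]$, and then read off each stack, cutting a new cell of $\mathcal P_I$ exactly when the value $B[v]$ changes. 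Each vertex is touched a constant number of times, and the empty cells $B_i\cap R_j=\emptyset$ are never materialized, so this is $O(|V|)$; the subsequent refinement into connected components is $O(|V|+|E|)$.

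**The one point that needs care** is verifying that feeding $(G,\mathcal P_I')$ to the hierarchical c-planarity algorithm is legitimate: $\mathcal P_I'$ is a partition, hence trivially hierarchical (any two cells are disjoint), and by construction each cell induces a connected subgraph, so $(G,\mathcal P_I')$ is c-connected and the linear-time algorithm of~\cite{cortese_etal:2008} applies directly with no extra hypotheses. One should also note that the input size $\ell(\mathcal P_B\cup\mathcal P_R)$ is $\Theta(|V|)$ because the two partitions together list every vertex exactly twice, so "linear in the input" and "linear in $|V|$" coincide here; thus all three phases — building $\mathcal P_I$, refining to $\mathcal P_I'$, and running the hierarchical test — run in linear time, and chaining them together with Theorem~\ref{THEO:2partitions} completes the proof. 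I do not expect a genuine obstacle; the subtlety is purely bookkeeping, namely making sure the intersection partition is computed without iterating over the potentially quadratically many empty intersections.
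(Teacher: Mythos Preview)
Your proposal is correct and follows essentially the same route as the paper: reduce via Theorem~\ref{THEO:2partitions} to testing c-planarity of $(G,\mathcal P_I')$, build $\mathcal P_I$ in $O(|V|)$ time with exactly the bucketing/stack trick the paper describes, refine to $\mathcal P_I'$ by computing connected components, and then invoke the linear-time algorithm of Cortese et al.\ for c-connected hierarchical clusterings. Your added remarks (that $\mathcal P_I'$ is trivially hierarchical and c-connected, and that the input size is $\Theta(|V|)$) make explicit what the paper leaves implicit, but the argument is otherwise identical.
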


Observe that if $(G, \mathcal P_B \cup \mathcal P_R)$ is only
c-connected then $(G, P_B \cup \mathcal P_R)$ does not have to be
c-planar even if $(G,\mathcal P_B)$, $(G,\mathcal P_R)$, and
$(G,\mathcal P'_I)$ are.

\begin{figure}
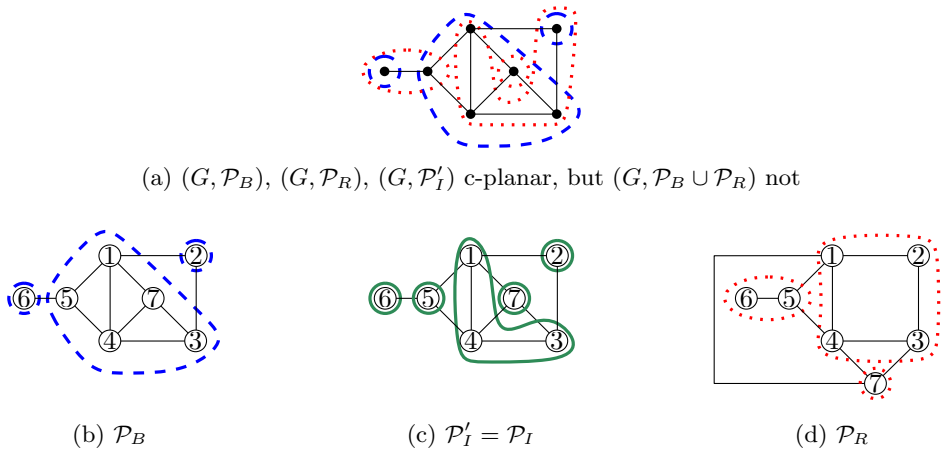

  \begin{minipage}[b]{\linewidth}
     \centering{\includegraphics[page=2]{negative_examples}}%
     \subcaption{$(G,\mathcal P_B)$, $(G,\mathcal P_R)$, $(G,\mathcal P'_I)$ c-planar,
  but $(G, \mathcal P_B \cup \mathcal P_R)$ not}\label{FIG:non_c_planar}
  \end{minipage}\\[1em]
  \begin{minipage}[b]{.33\linewidth}
     \centering{\includegraphics[page=3]{negative_examples}}%
     \subcaption{$\mathcal P_B$}\label{FIG:Proof_non_c_planarB}
  \end{minipage}%
  \begin{minipage}[b]{.33\linewidth}
     \centering{\includegraphics[page=4]{negative_examples}}%
     \subcaption{$\mathcal P_I'=\mathcal P_I$}\label{FIG:Proof_non_c_planarI}
  \end{minipage}%
  \begin{minipage}[b]{.33\linewidth}
     \centering{\includegraphics[page=5]{negative_examples}}%
     \subcaption{$\mathcal P_R$}\label{FIG:Proof_non_c_planarR}
  \end{minipage}%
  \caption{\label{FIG:Proof_non_c_planar}Two c-connected partitions do not have to be c-planar, even if each partition and the intersection partition is.}
\end{figure}

  E.g., let $G$ be the graph in Fig.~\ref{FIG:non_c_planar}, let $\mathcal P_R$ and
  $\mathcal P_B$, respectively, be the partition of the vertex set
  enclosed by the red dotted and blue dashed curves,
  respectively. Let $\mathcal P_I'$ be the connected intersection
  partition of $\mathcal P_R$ and $\mathcal P_B$. 
  The embedding in Fig.~\ref{FIG:non_c_planar} is c-planar for
  $\mathcal P_B$ and $\mathcal P'_I$~--~see also
  Fig.~\ref{FIG:Proof_non_c_planarB}+\ref{FIG:Proof_non_c_planarI}.
  Fig.~\ref{FIG:Proof_non_c_planarR} shows an embedding that is
  c-planar for $\mathcal P_R$. 
  However, $(G,\mathcal P_B \cup \mathcal P_R)$ is not c-planar:
Assume that there would be an
  embedding that is c-planar for
  $(G, \mathcal P_B \cup \mathcal P_R)$. We use the vertex labeling
  indicated in Fig~\ref{FIG:Proof_non_c_planar}. Due to cluster
  $\{1,2,3,4\}$ the interior of the cycle $c_R=\left<1,2,3,4\right>$
  must be empty. Thus, vertex 5 must be drawn outside $c_R$. Due to
  the cluster $C=\{1,3,4,5,7\}$, vertex 2 and 6 must not be enclosed
  by the triangle $c_B=\left<1,4,5\right>$. It follows that the edges
  connecting 5 to $c_R$ must be drawn such that $c_B$ does not enclose
  $c_R$ and that 6 is outside $c_B$. Due to the edge $\{3,7\}$, vertex
  7 is not enclosed by $c_B$ either. Thus, except for the edge
  $e=\{1,7\}$, the embedding is as indicated in
  Fig.~\ref{FIG:Proof_non_c_planarR}. But no matter how we would add
  $e$ in a planar embedding, we would either create a cycle in $G[C]$
  enclosing vertex 2 or vertex 6.
 

\section{C-Connected Clusterings on 2-Connected Graphs}\label{SEC:c-connected-SPQR}

We now describe a polynomial-time method for testing c-planarity for a
planar 2-connected graph $G$ and a set of c-connected clusters
$\mathcal{C}$.  The method described in this section has some
similarities with the algorithm described by Angelini et
al.~\cite{angelini_etal:cg2015} for deciding whether a hierarchically
clustered graph has a drawing with region-region
crossings. The method we give here can, however, be also applied in
Sect.~\ref{SEC:c-connected-BC} to the case where the
underlying graph is not 2-connected.
   
Let $T$ be the
SPQR-tree of $G$ rooted at a $Q$-node $r$ representing the edge $e$ of
$G$. The embeddings of $T$ represent the embeddings of $G$ with $e$ on
the outer face. They in turn induce embeddings of the skeletons of all
nodes of $T$ with their root edges on the outer face.

Apart from the choice of the root, i.e., the choice of the outer face,
the degrees of freedom we have are the order of the parallel components
of the graph at the P-nodes, and in which way the R-nodes are flipped.

In the following let $C \in \mathcal{C}$ be a cluster such that $G[C]$
is connected and let $C_{\text{ext}} \subseteq C$ be a subset of
vertices that we want to be incident to the outer face of $G[C]$
(we'll need $C_{\text{ext}}$ in Sect.~\ref{SEC:c-connected-BC}). We
label each node $\nu$ in $T$ and its corresponding edge in the
skeleton of its predecessor node to capture, which parts of
skel$(\nu)$ are contained in $C$.\footnote{%
  Angelini et al.~\cite{angelini_etal:cg2015} used a similar
  labeling scheme: Their ``full'' corresponds to our
  ``inside'', their ``spined'' corresponds to our
  ``non-outside''. Observe, however, that their labels ``side-spined''
  and ``central-spined'' depend on a given drawing while our labels
  ``double-border'' and ``border'' do not.} Let $s$ and $t$ be the poles of $G_r(\nu)$. The node $\nu$ is an
\emph{inside} node for $C$, if $G_r(\nu)$ is completely contained in
$C$ and at most the poles $s$ and $t$ of $G_r(\nu)$ are in $C_{\text{ext}}$. It is
\emph{inappropriate} if $G_r^-(\nu)$ has no embedding with the poles
on the outer face that is c-planar for $\{C\}$ and is such that
$C_{\text{ext}}$ is on the outer face of $G_r^-(\nu)[C]$. Node $\nu$
is an \emph{outside} node, if it is not inappropriate and $G^-_r(\nu)$
contains no $C$-path between its poles. $\nu$ is \emph{border} if
$\nu$ is neither inside nor outside and $G_r^-(\nu)$ has an embedding
with the poles on the outer face that is c-planar for $\{C\}$ and such
that exactly one of the outer $s$-$t$-paths contains one or more vertices
not in $C \setminus (C_{\text{ext}} \setminus \{s,t\})$. In all other cases $\nu$ is
\emph{double-border}. See Fig.~\ref{FIG:example_labeling} for an
example with $C_{\text{ext}} = \emptyset$.

Traversing the SPQR-tree, we can compute for all nodes whether they
are inside, outside, border, double-border, or inappropriate for a
given cluster $C$. Let node $\nu$ be neither inside nor
outside. If $\nu$ has an inappropriate
child then $\nu$ is also inappropriate. In the following, we assume
that $\nu$ has no inappropriate children.
If $\nu$ is a \textit{P-node} it is \emph{border} if it has no
double-border node and at most one border node among its children. It
is \emph{double-border} if it either has no double-border node and
exactly two border nodes or exactly one double-border node and neither
inside nor border nodes among its children. Otherwise it is
\emph{inappropriate}.
%
%
If $\nu$ is an \emph{S-node}, it is \emph{double-border} if it has at
least one double-border node as child and \emph{border} otherwise.
If $\nu$ is an \emph{R-node} consider skel$^-(\nu)$ embedded with the
poles $s$ and $t$ on the outer face. $\nu$ is \emph{inappropriate} if
there is a simple cycle in skel$^-(\nu)$ that does not contain outside
edges but (a) contains double-border edges or (b) encloses vertices in
$C_{\text{ext}}$ or non-inside edges. Otherwise, it is
\emph{double-border} if both $s$-$t$-paths on the outer face contain
vertices in $C_{\text{ext}}\setminus \{s,t\}$ or non-inside edges. 
In the remaining cases $\nu$ is \emph{border}.
Finally, if $\nu=r$ is the root, let $e$ be the edge of $G$
represented by $r$.  If $e$ does not have both end vertices in $C$ then
$r$ has the same label as its unique child $\mu$. Otherwise $r$ is
\emph{border} if $\mu$ is outer or border and \emph{inappropriate}
otherwise.

\begin{figure}[t!]
\begin{center}
  \begin{minipage}[b]{.45\linewidth}
     \centering{\includegraphics[page=1]{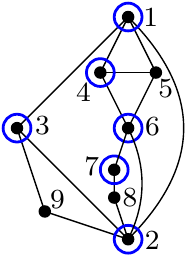}}%
     \subcaption{$(G,\{C\})$}\label{FIG:example_labeling_graph}
  \end{minipage}
  \begin{minipage}[b]{.45\linewidth}
     \centering{\includegraphics[page=2]{example_labeling}}%
     \subcaption{SPQR-tree $T_r$}\label{FIG:example_labeling_tree}
  \end{minipage}
\end{center}
\caption{(a) A graph $G$ with cluster $C=\{1,2,3,4,6,7\}$ containing the circled vertices and (b) the labeling of the nodes of its SPQR-tree: 
thick blue solid circled nodes are inside, dashed nodes border, and dotted nodes double-border. The root is inappropriate.}\label{FIG:example_labeling}
\end{figure}

An \emph{external path} of a node $\nu$ is a path in $G$
between the poles of $G_r(\nu)$ that does not contain any other vertices of
$G_r(\nu)$.  We label the root edge of skel$(\nu)$ \emph{inside} for
$C$ if $\nu$ has an external $C$-path and \emph{outside} otherwise. 
We say that an external path $p$ of node $\nu$ is \emph{to the right
  (left) of} $\nu$ with respect to the ordered pair $(s,t)$ of its
poles if the cycle that is induced by $p$ in the graph that results
from $G$ by contracting $G_r(\nu)$ is oriented (counter-) clockwise 
assuming that
$p$ was oriented from $t$ to $s$. Two external paths of $\nu$ are on
the \emph{same side} of $\nu$ if they are both to the right or both to
the left of $\nu$ with respect to an arbitrary ordering of the poles
of $\nu$. Otherwise, they are on \emph{different sides}.


\begin{lemma}\label{LEMMA:char}
  Let $G=(V,E)$ be a 2-connected graph, let $C \subset V$ be a cluster
  inducing a connected subgraph of $G$, let $C_{\text{ext}} \subseteq
  C$, let $T$ the SPQR-tree of $G$, and let $r$ be a Q-node of $T$
  representing the edge $e$ of $G$.  A planar embedding of $G$ with
  $e$ on the outer face is c-planar for $\{C\}$ with $C_{\text{ext}}$
  incident to the outer face of $G[C]$, if and only if the following
  conditions are fulfilled for any non-inside node $\nu$ of its SPQR-tree $T$.
 \begin{enumerate}
 \item All external $C$-paths of $\nu$
   are embedded on the same side of $G_r(\nu)$~--~which we reflect by
   the embedding of the root edge of skel$(\nu)$.
 \item skel$(\nu)$ contains no simple cycle of
   non-outside edges that encloses a non-inside edge or a vertex in $C_{\text{ext}}$.
 \end{enumerate}
\end{lemma}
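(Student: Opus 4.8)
The plan is to prove both directions by induction on the structure of the SPQR-tree $T$, using the recursive labeling of nodes (inside / outside / border / double-border / inappropriate) as the bookkeeping device that the two conditions really encode. First I would establish the correctness of the labeling itself as a separate fact: for every node $\nu$, the label assigned to $\nu$ by the recursive rules stated just before the lemma is exactly the description of the possible ways $G_r^-(\nu)[C]$ can meet the outer face of a c-planar embedding of $G_r^-(\nu)$ with the poles on the outer face (inside $=$ all of $G_r(\nu)\subseteq C$ and at most the poles in $C_{\text{ext}}$; outside $=$ no $C$-path between poles; border $=$ exactly one of the two outer pole-to-pole paths may carry a vertex of $C\setminus(C_{\text{ext}}\setminus\{s,t\})$; double-border $=$ both must; inappropriate $=$ no such embedding exists). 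This amounts to checking, case by case, the P-, S-, R- and root-rules against the definition; the S- and P-cases are immediate from how components combine in series and in parallel, and the R-case is exactly Condition 2 restricted to skel$^-(\nu)$ together with the $C_{\text{ext}}$/non-inside enclosure test.

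For the forward direction, I assume a planar embedding of $G$ with $e$ on the outer face that is c-planar for $\{C\}$ with $C_{\text{ext}}$ on the outer face of $G[C]$, and I take any non-inside node $\nu$. Condition 1: if two external $C$-paths of $\nu$ were on different sides of $G_r(\nu)$, then together with a $C$-path through $G_r(\nu)$ (which exists unless $\nu$ is outside — and if $\nu$ is outside the external paths together already enclose $G_r^-(\nu)$, which is nonempty and not entirely in $C$ since $\nu\ne$ inside) they would form a $C$-cycle enclosing a vertex not in $C$, contradicting c-planarity; so all external $C$-paths lie on one side, and we may record this by orienting the root edge of skel$(\nu)$. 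Condition 2: the embedding of $G$ restricts to an embedding of skel$(\nu)$ in which each skeleton edge $e_\mu$ inherits a region containing the embedded $G_r^-(\mu)$ (or the external path for the root edge); a simple cycle of non-outside edges in skel$(\nu)$ then lifts to a $C$-cycle in $G$ (non-outside edges carry a $C$-path between their endpoints), and if it enclosed a non-inside edge or a $C_{\text{ext}}$ vertex in the skeleton it would enclose, in $G$, either a non-$C$ vertex or a $C_{\text{ext}}$ vertex drawn off the outer face of $G[C]$ — again a contradiction.

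For the converse I build the embedding bottom-up: given an orientation of every root edge satisfying Condition 1 and an embedding of every skeleton satisfying Condition 2, I glue the skeletons along the tree (P-nodes contribute a choice of order of the parallel children, R-nodes a flip) to get a planar embedding of $G$ with $e$ on the outer face, and then argue it is c-planar for $\{C\}$. Here I would use the labeling: by the labeling-correctness fact, Condition 2 holding at every node is equivalent to no node being inappropriate and, at each node, the chosen skeleton embedding realizing its label; Condition 1 guarantees that when the pieces $G_r^-(\mu)[C]$ are inserted they all attach to the ``$C$-side'' of $e_\mu$, so no new enclosure is created at the gluing step. Any $C$-cycle in the assembled $G[C]$ either lives inside a single $G_r^-(\nu)$ — handled by induction — or projects to a simple cycle of non-outside edges in some skel$(\nu)$, and then Condition 2 at $\nu$ forbids it from enclosing a non-$C$ vertex or a $C_{\text{ext}}$ vertex off the outer face of $G[C]$. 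This gives $V\setminus C$ in the outer face of $G[C]$ and $C_{\text{ext}}$ incident to it.

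I expect the main obstacle to be the converse direction, specifically making the ``lifting of skeleton cycles'' argument airtight: one must show that \emph{every} $C$-cycle in the glued graph that is not confined to a single $G_r^-(\nu)$ projects to a \emph{simple} cycle of non-outside skeleton edges (subpaths of the $C$-cycle inside a child piece collapse to the child's skeleton edge, which is non-outside precisely because such a subpath is an external $C$-path of the child), and conversely that the enclosure relation is preserved under this projection/contraction — i.e. a vertex enclosed by the $C$-cycle in $G$ corresponds to a vertex or edge enclosed by the projected cycle in the skeleton, using that $G$ is 2-connected so the contraction of each $G_r(\nu)$ behaves well. The interaction with $C_{\text{ext}}$ (a vertex of $C_{\text{ext}}$ at a pole shared between a child and its parent, which Condition 1's orientation and the ``at most the poles in $C_{\text{ext}}$'' clause in the definition of \emph{inside} are designed to handle) is the fussiest bookkeeping and will need a careful but routine case split on which pole-to-pole outer path of each skeleton carries $C_{\text{ext}}$ vertices.
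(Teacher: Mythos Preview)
Your plan is workable, but it is considerably more elaborate than the paper's argument, and in one place the framing suggests you have slightly misread the statement.

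\textbf{Framing of the converse.} The lemma is about a \emph{given} planar embedding of $G$: one checks Conditions~1 and~2 in the skeleton embeddings \emph{induced} by that embedding and concludes the embedding is c-planar for $\{C\}$. Your converse begins by ``building the embedding bottom-up'' from skeleton embeddings and orientations. Since embeddings of $G$ with $e$ on the outer face correspond bijectively to compatible collections of skeleton embeddings, your formulation is equivalent, but it adds a layer you do not need.

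\textbf{The labeling-correctness lemma is unnecessary here.} The proof of Lemma~\ref{LEMMA:char} only uses the raw definitions of ``inside'' (all of $G_r(\nu)\subseteq C$, at most poles in $C_{\text{ext}}$) and ``outside'' (no $C$-path in $G_r^-(\nu)$ between the poles), not the recursive P/S/R rules or the border/double-border refinement. Establishing the full labeling rules is orthogonal to this lemma.

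\textbf{What the paper actually does.} The forward direction is dismissed in one sentence (your Condition~1 and Condition~2 arguments are fine and essentially what is meant by ``clearly''). For the converse the paper avoids both induction and any global projection of $C$-cycles to skeletons. Instead it argues directly by contradiction: take a vertex $v\in V\setminus(C\setminus C_{\text{ext}})$ enclosed by a $C$-cycle $c'$, pick an edge $e'$ incident to $v$, and let $\nu'$ be its Q-node. Condition~1 rules out that $c'$ splits into two external $C$-paths of any node on the $\nu'$--$r$ path. Now take the \emph{first} node $\nu$ on that path such that $c'$ lies in $G_r^-(\nu)$, possibly closed through a single external $C$-path of $\nu$ via the root edge. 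In $\mathrm{skel}(\nu)$ the cycle $c'$ induces a simple cycle $c$ of non-outside edges; by minimality of $\nu$, this cycle does not contain the edge $e_\mu$ for the child $\mu$ of $\nu$ on the $\nu'$--$\nu$ path, hence $c$ encloses $e_\mu$. But $e_\mu$ is non-inside (since $v\in G_r(\mu)$ and $v\notin C$), or else $v\in C_{\text{ext}}$ is a pole of $\mu$ not on $c$. Either way Condition~2 is violated at $\nu$.

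\textbf{Comparison.} Your projection idea (collapse each maximal subpath of a $C$-cycle inside some $G_r^-(\mu)$ to the skeleton edge $e_\mu$) is exactly what the paper's ``$c'$ induces a cycle $c$ in $\mathrm{skel}(\nu)$'' step does, but the paper localises everything in one shot by walking up from the enclosed vertex, rather than by induction over the tree. This sidesteps precisely the obstacle you flag as ``main'': you never need to argue that an arbitrary $C$-cycle projects to a simple skeleton cycle at an unspecified node, because the choice of $\nu$ via the path from $\nu'$ hands you both the right skeleton and the non-inside enclosed edge $e_\mu$ simultaneously. Your route would work, but it reproves what the walk-up-the-tree trick gives for free.
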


\begin{proof}
  Clearly, both conditions must be fulfilled for a c-planar embedding
  with $C_{\text{ext}}$ on the outer face of the cluster. So assume
  now that both Conditions  are fulfilled. Let $v \in V \setminus (C
  \setminus C_{\text{ext}})$, let $e' \neq e$ be an edge incident to
  $v$, and let $\nu'$ be the Q-node representing $e'$.  Assume that
  $G$ contains a $C$-cycle $c'$ enclosing $v$.
  By Condition~1, there is no node $\nu$ on the $\nu'$-$r$-path such
  that $c'$ can be decomposed into two external $C$-paths of $\nu$. So,
  let $\nu$ be the first node on the $\nu'$-$r$-path such that $c'$ is
  contained in $G^-_r(\nu)$ or can be composed by a path in
  $G^-_r(\nu)$ and an external $C$-path of $\nu$. Observe that $c'$
  induces a cycle $c$ in skel$(\nu)$ that contains only non-outside
  edges. Let $\mu$ be the child of $\nu$ on the $\nu'$-$\nu$-path.  By
  the choice of $\nu$ it follows that $c$ does not contain the edge
  $e_\mu$ of skel$(\nu)$. Hence $c$ encloses the edge
  $e_\mu$. However, $e_\mu$ was either not inside or $v \in
  C_{\text{ext}}$ is an end vertex of $e_\mu$ not in
  $c$~--~contradicting Condition~2. 
\end{proof}

In the following, we construct a set of binary matrices from an
initial embedding of~$T$ that have the consecutive-ones property, if
and only if there is a c-planar embedding for $\mathcal C$ with the
fixed root edge on the outer face.  The total size of the matrices will be
in $\mathcal O(|V|\ell(\mathcal C))$.

\subsection{Modeling by Consecutive-Ones Property}
For each possible root $r$ of $T$ that is not inappropriate for any $C
\in \mathcal C$, we start with a fixed embedding of $T$~--~including
fixed flips of the R-nodes~--~and perform the following steps:
\subsubsection{Splitting T}
We split $T$ at each R-node,
removing the edges from the R-node to its children from $T$. Let $T_r$
be the subtree containing $r$. For each former non-leaf child $\rho'$
of an R-node $\nu$ we attach a new Q-node $\rho$ to $\rho'$.  We root
the subtree containing $\rho'$ at $\rho$ and denote it by $T_\rho$. We
label $\rho$ inside for a cluster $C$, if $\rho'$ had an external
$C$-path and outside otherwise. In the parent tree, we replace the
R-node $\nu$ by a special P-node $\nu'$ with the same label and three
Q-nodes $\nu_1$, $\nu_2$, $\nu_3$ in this order as children. If the
R-node $\nu$ was labeled border for a cluster $C$, we label $\nu_2$
and exactly one among $\nu_1$ and $\nu_3$ as inside and the other as
outside. More precisely, we label $\nu_1$ as outside if and only if
the left outer path of skel$^-(\nu)$ between its poles contains
non-inside edges or vertices from $C_{\text{ext}}$.  If the R-node $\nu$ was labeled double-border, we
label $\nu_1$ and $\nu_3$ as border and $\nu_2$ as inside.  If the
R-node was labeled inside or outside, we label all three children as
inside or outside, respectively. We thus end up with a forest
containing only S, P, and
Q-nodes. 

\subsubsection{Initializing the Matrices} For each root $\rho$ of one
of the subtrees, we create a new binary matrix $M_\rho$. A node in
$T_\rho$ is a \emph{lowest-P-child}, if it is the child of a P-node
and has no other P-nodes in its subtree. The embedding of $T_\rho$
induces an ordering of the lowest-P-children from left to right. We
initialize $M_\rho$ with a column for each lowest-P-child in
accordance with the ordering. For a node $\nu$ of $T_\rho$, we use
$c(\nu)$ to refer to the set of its \emph{corresponding columns} in
$M_\rho$, i.e. the columns of the lowest-P-children in $\nu$'s
subtree. For $M_r$ we create one additional \emph{external
  column} $c(r)$. For $\rho \neq r$,
we create two additional \emph{external columns}, enclosing the rest
of the matrix. 
For each cluster $C$, one of the two external columns will represent
the side of possible external $C$-paths of the child of $\rho$ and
will be denoted by $c_C(\rho)$.

We then create a row for each non-leaf node $\nu$, adding 1s in the
columns in $c(\nu)$ and 0s in all other columns. This ensures that in
every permutation of the columns of $M_\rho$ for which the 1s are
consecutive in all rows, the columns of the lowest-P-children of each
node remain adjacent, allowing a reconstruction of an embedding of
$T_\rho$ from the ordering of the columns in $M_\rho$. See Fig.~\ref{FIG:matrix_init}. If $\rho \neq
r$ we add two rows having all 1s except for one 0 in the first or last
external column, respectively. 

\begin{figure}[t!]
\begin{center}
  \begin{minipage}[b]{.45\linewidth}
     \centering{\includegraphics[page=1]{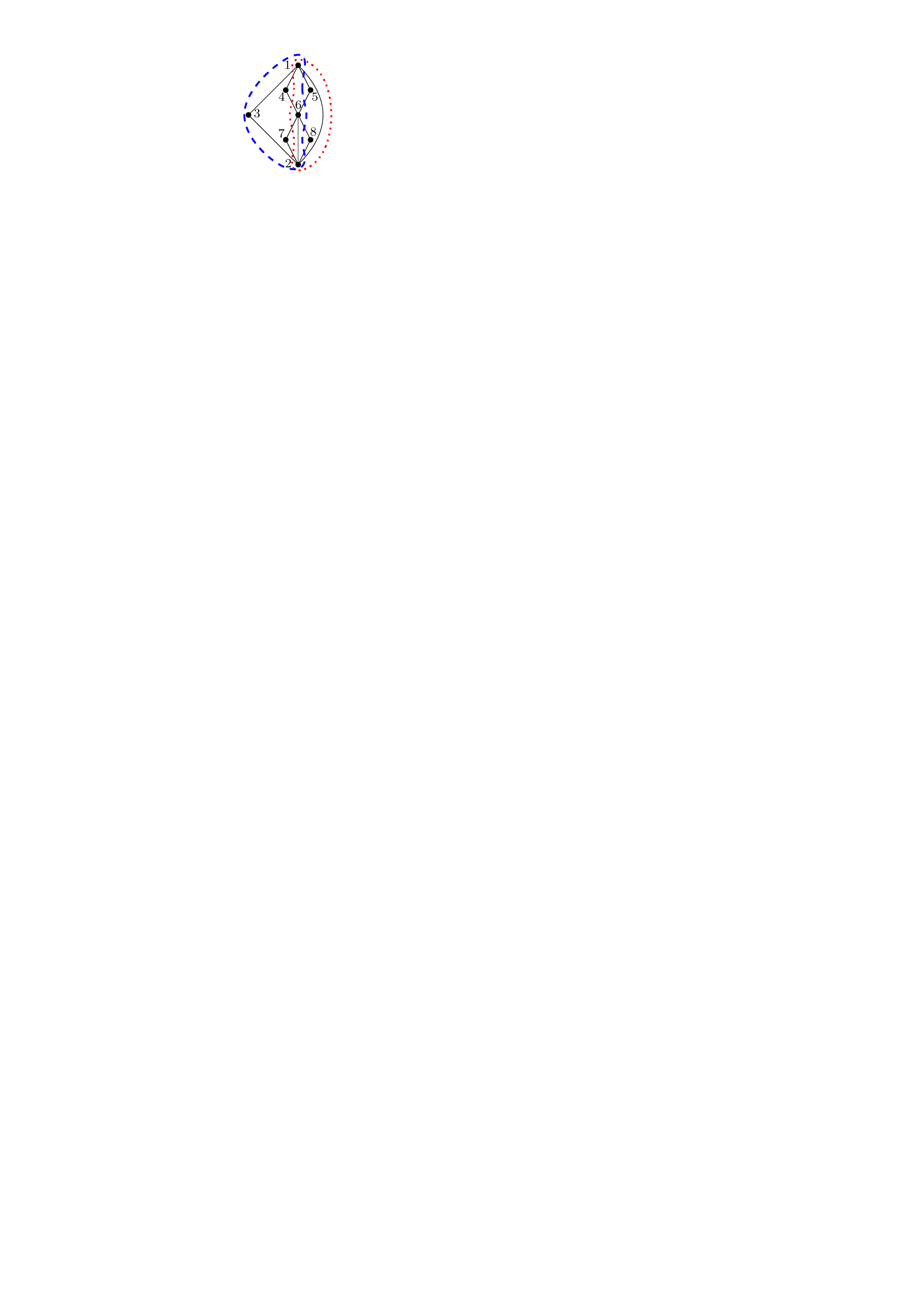}}%
     \subcaption{$(G,\mathcal{C})$}\label{FIG:labels_graph}
  \end{minipage}
  \begin{minipage}[b]{.45\linewidth}
     \centering{\includegraphics[page=2]{example}}%
     \subcaption{SPQR-tree $T_r$}\label{FIG:labels_tree}
  \end{minipage}\\[2ex]
  \begin{minipage}[b]{.55\linewidth}
  	\scalebox{0.7}{\let\quad\thinspace 
$
\setstretch{1.1}
\bordermatrix{
& c(S_1) & c(S_3) & c(S_4) & c(S_5) & c(6,2) & c(S_6) & \underset{(=ext)}{c(1,2)}\cr
P_1 & \tikzmark{LeftInit}1 & 1 & 1 & 1 & 1 & 1 & 0 \cr
S_2 & 0 & 1 & 1 & 1 & 1 & 1 & 0 \cr
P_2 & 0 & 1 & 1 & 0 & 0 & 0 & 0 \cr
P_3 & 0 & 0 & 0 & 1 & 1 & 1 & 0\tikzmark{RightInit} \cr
r_0(P_1,B) & \tikzmark{BleftP1}1 & 0 & 0 & 0 & 0 & 0 & 0 \cr
r_1(P_1,B) & 1 & 1 & 1 & 1 & 1 & 1 & 0 \cr
r_0(P_2,B) & 1 & \tikzmark{BleftS2}\tikzmark{BleftP2}1 & 0\tikzmark{BrightP2} & 0 & 0 & 0 & 0 \cr
r_0(P_3,B) & 1 & 1 & 1 & \tikzmark{BleftP3}1 & 1 & 0\tikzmark{BrightP3}\tikzmark{BrightS2}\tikzmark{BrightP1} & 0 \cr
r_1(P_1,R) & \tikzmark{RleftP1}0 & 1 & 1 & 1 & 1 & 1 & 1 \cr
r_0(P_2,R) & 0 & \tikzmark{RleftS2}\tikzmark{RleftP2}0 & 1\tikzmark{RrightP2} & 1 & 1 & 1 & 1 \cr
r_0(P_3,R) & 0 & 0 & 0 & \tikzmark{RleftP3}0 & 1 & 1\tikzmark{RrightP3}\tikzmark{RrightS2}\tikzmark{RrightP1} & 1 \cr
}
$
\DrawBox[thick, gray, dashdotted]{LeftInit}{RightInit}{\textcolor{gray}{ $\rotatebox[origin=lB]{90}{initialization}$}}
\DrawBox[thick, red, dotted]{RleftP2}{RrightP2}{\textcolor{red}{ $P_2$}}
\DrawBox[thick, red, dotted]{RleftP3}{RrightP3}{\textcolor{red}{ $P_3$}}
\DrawLowBox[thick, red, dotted]{RleftS2}{RrightS2}{\textcolor{red}{ $S_2$}}
\DrawBox[thick, red, dotted]{RleftP1}{RrightP1}{\textcolor{red}{ $P_1$}}
\DrawBox[thick, blue, dashed]{BleftP2}{BrightP2}{\textcolor{blue}{ $P_2$}}
\DrawBox[thick, blue, dashed]{BleftP3}{BrightP3}{\textcolor{blue}{ $P_3$}}
\DrawLowBox[thick, blue, dashed]{BleftS2}{BrightS2}{\textcolor{blue}{ $S_2$}}
\DrawBox[thick, blue, dashed]{BleftP1}{BrightP1}{\textcolor{blue}{ $P_1$}}
}
     \subcaption{$M_r$}\label{FIG:labels_matrix}
  \end{minipage} 
\caption{$\mathcal{C} = \{R ,B\}$ with $R = \{1,2,5,6,8\}$, $B = \{1,2,3,4,6,7\}$. Circled blue nodes in $T_r$ are inside for $B$, squared red nodes inside for $R$. $P_1$, $P_2$, $P_3$, and $S_2$ are border for $R$ and $B$. $R$ and $B$ have a different constraint in $S_2$ and thus different halves of the blocks are filled with 1s in $M_r$.
}
	\label{FIG:matrix_init}
	
\end{center}
\end{figure}

In order to fill the matrix $M_\rho$, we traverse the tree $T_\rho$
with a post-order DFS. For each cluster $C \in \mathcal{C}$ and each
examined node we add up to three rows to $M_\rho$.  We define for each
node $\nu$ and each cluster $C$ a set $r(\nu,C)$ of \emph{relevant
  rows}.   For each lowest P-child $\nu$,
we set $r(\nu,C) = \emptyset$. The block $B(\nu,C)$
is the submatrix of $M_\rho$ with entries in rows $r(\nu,C)$ and
columns $c(\nu)$.
 When we
create rows in $M_\rho$, the default entries are 0 and we explicitly
mention when we set the entries to 1.

\subsubsection{Handling P-nodes}
For a P-node $\nu$ with children $\nu_1, \ldots, \nu_k$ we initialize
$r(\nu,C)$ as $r(\nu_1,C) \cup \ldots \cup r(\nu_k,C)$. Due to
c-planarity, the children of $\nu$ must be permuted such that all
inside children are consecutive pre- and succeeded by at most one
border child and arbitrary many outside children.\footnote{This
  observation was also used by Angelini et
  al.~\cite{angelini_etal:cg2015}. However, they handle distinct
  P-nodes independently while we handle all nodes simultaneously in
  the consecutive ones matrices.} Hence, if $\nu$ is neither outside
nor double-border, we add up to 3 constraint-rows $r_0(\nu, C)$,
$r_1(\nu, C)$, and $r_2(\nu, C)$ to $r(\nu,C)$. If $\nu$ has inside
children, we add $r_0(\nu,C)$
with 1s 
in all columns in $c(\nu_i)$ where $\nu_i$
is an inside child of $\nu$. This ensures that all inside children are
placed in consecutive order. If $\nu$ has a child $\mu$ that is a
border node, 
we add $r_1(\nu,C)$ with 1s in all columns in $c(\mu)$ and again with
1s in all columns in $c(\nu_i)$ where $\nu_i$ is an inside child of
$\nu$.
We do the
same for a potential second border node in a third row $r_2(\nu,
C)$. This ensures, that the border children are placed next to the
inside children, with at most one border child on each side.  Finally,
let $\mu$ be a child of $\nu$ let $i \in r(\mu,C)$ and let $j \in
c(\nu) \setminus c(\mu)$. Then we set the entry in row $i$ and column
$j$ to 1, if one or more of the rows in $r(\nu,C) \setminus
r(\mu,C)$ contain a 1 in the same column. See Fig.~\ref{FIG:P-matrix}.
\begin{figure}[t!]
  \begin{center}
  \begin{minipage}[b]{.45\linewidth}
     \centering{\includegraphics[page=1]{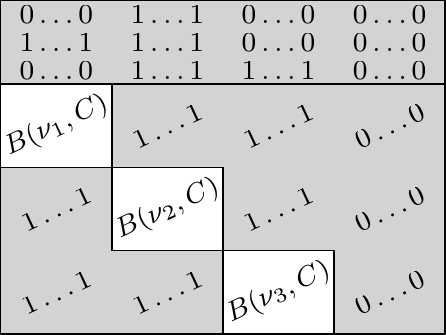}}%
     \subcaption{P-node}\label{FIG:P-matrix}
  \end{minipage}
  \begin{minipage}[b]{.45\linewidth}
     \centering{\includegraphics[page=2]{matrices}}%
     \subcaption{S-node}\label{FIG:S-matrix}
  \end{minipage}
  \caption{(a) The block $B(\nu,C)$ for a P-node $\nu$ with an inside child $\nu_2$, an outside child $\nu_4$ and two border children $\nu_1$ and $\nu_3$. (b) The block $B(\nu,C)$ for an S-node $\nu$ with four children. If $\nu$ is not outer and has an external $C$-path then the upper half, the lower half, or both are filled with 1s.}
   \end{center}
\end{figure}

\subsubsection{Handling S-nodes}
If an S-node $\nu$ with children $\nu_1, \ldots, \nu_k$ is outside then $r(\nu,C) = \emptyset$, 
otherwise $r(\nu,C) = r(\nu_1,C) \cup \ldots \cup r(\nu_k,C)$.

Assume now that $\nu$ is not outside and has an external $C$-path.
Observe that in this case $\nu$ cannot be double-border. Otherwise $r$
would be inappropriate for $C$.  If $\nu$ has two or more P-nodes as
children, we have to make sure that the 1s in each P-node and the 1s
in the external path can be made consecutive via additional 1s.

More precisely, let $\nu_1,\nu_2\dots,\nu_k$ be the children of $\nu$
that are P-nodes. The upper half for a cluster $C$ are all entries in
rows $r(\nu_i,C)$, $i=1,\dots,k$ and columns $c(\nu_i)$, $i =
j+1,\dots,k$ while the lower half are all entries in columns
$c(\nu_i)$, $i=1,\dots,k$ and rows $r(\nu_i,C)$, $i = j+1,\dots,k$. We
fill both, the upper and the lower half with 1s if $\nu$ is inside and
we fill either the upper or the lower half with 1s if $\nu$ is
border. See Fig.~\ref{FIG:S-matrix}.

Recall that if $\nu$ is not inside then the external $C$-paths must
all be on the same side of $G_r(\nu)$ in a c-planar drawing of
$G$. However, external $C_1$- and $C_2$-paths could be on different
sides for distinct clusters $C_1$ and $C_2$. Hence, we cannot just
always fill the upper half with 1s. To this end, 
we will define same and different constraints that only depend on the
structure of the clustered graph. A same (different) constraint
indicates that if there is a c-planar embedding then the external
paths must be on the same (different) side.
%
%
%
%
We call a cluster $C$ \emph{critical} for $\nu$ if $\nu$ is border
with respect to $C$ and has an external $C$-path.

Let $C_1$ and $C_2$ be two clusters that are critical for $\nu$. 
%
%
%
%
If $\nu$ has an external $C_1$-path that is also an external
$C_2$-path then there is a \emph{same constraint} between $C_1$ and
$C_2$. 
Otherwise,
there is a \emph{different constraint} between $C_1$ and $C_2$: Assume
that there would be an external $C_1$-path $p_1$ and an external
$C_2$-path $p_2$ of $\nu$ that are on the same side of $G_r(\nu)$ 
in a c-planar embedding of $G$. 
Since $\nu$ is border, there is a $C_i$-path $p_i^\nu$, $i=1,2$ in $G_r^-(\nu)$.
Consider the cycles $c_i$, $i=1,2$ composed by
$p_i^\nu$ and $p_i$. By c-planarity, each portion of $p_1$ that is inside
$c_2$ must be in $G[C_2]$ and vice versa. Since $p_1$ and
$p_2$ are on the same side of $G_r(\nu)$, there is an 
external path of $\nu$ that contains only edges of $p_1$ inside $c_2$,
edges of $p_2$ inside $c_1$ and common edges of $p_1$ and $p_2$, i.e.,
only edges in $G[C_1 \cap C_2]$.

%
Fix now an arbitrary cluster $C$ that is critical for $\nu$ and assign
$C$ the upper half. Assign to any other cluster $C'$ that is critical
for $\nu$ the upper half if there is a same constraint between $C$ and
$C'$ and the lower half otherwise.

\subsubsection{External Columns}
If $\rho = r$ let $\nu$ be the unique child of $r$ and let $e$ be the
edge represented by the Q-node $r$. Then the external column is $1$
for each row in $r(\nu,C)$ if the cluster $C$ contains both end
vertices of~$e$.

If $\rho \neq r$ then the unique child $\rho'$ of $\rho$ was the child
of an R-node $\nu$. 
Consider a fixed embedding of skel$^-(\nu)$ with its poles $s$ and $t$
on the external face. Let $C$ be a cluster for which $\rho'$ is a
border node and has an external $C$-path. We have to make sure that
the parts of $G^-_r(\rho')$ that are not in $C\setminus
C_{\text{ext}}$ are embedded such that they are not enclosed by a
$C$-cycle in $G$ that is composed by an external $C$-path of $\rho'$
and a $C$-path in $G_r(\rho')$ between its poles $s'$ and $t'$.

Consider first that skel$^-(\nu)$ contains a cycle $c$ containing
$e_{\rho'}$ and consisting only of non-outside edges. If $c$
is (counter-)clockwise oriented when traversing $e_{\rho'}$ from $s'$
to $t'$, then we set $c_C(\rho)$ to be the (left) right external
column. 

Otherwise all external $C$-paths of $\rho'$ must contain an
external $C$-path of $\nu$. Thus, $\nu$ is not double-border.  
Moreover, the set of vertices of skel$^-(\nu)$ that can be reached
from $s$ using only non-outside edges and not $e_{\rho'}$ 
induces an $s$-$t$-cut of skel$^-(\nu)$ that contains $e_{\rho'}$ and
no other non-outside edges. It follows that $e_{\rho'}$ is on the left
(right) outer $s$-$t$-path and all external $C$-paths $\nu$ are to the
left (right) of $\nu$ with respect to $(s,t)$ in any c-planar
embedding. Hence, if $e$ is on the left (right) outer $s$-$t$-path
then we set $c_C(\rho)$ to be the left (right) external column. In
both cases we set the entry in column $c_C(\rho)$ to 1 for each row
in~$r(\rho',C)$.

\bigskip

Clearly the number of columns is linear in the number of Q-nodes and
R-nodes and thus linear in $|V|$ for planar graphs. For a cluster $C$
and a P-node $\nu$, we enter up to three rows but at most if both
poles are in $C$.  Observe that at least one of the poles of a P-node
$\nu$ is not a pole of another P-node $\nu'$ on the path from $\nu$
to the root. Hence, the number of rows is bounded by
$3\ell(\mathcal C)$.

Applying the next theorem with $C_{\text{ext}}=\emptyset$ yields a
characterization of c-connected overlapping clustered graphs with
underlying 2-connected graphs.  
 
\setcounter{THM:2-connected}{\value{theorem}}

\begin{theorem}\label{THM:2-connected}
  A c-connected overlapping clustered graph $(G,\mathcal C)$ with an
  underlying planar 2-connected graph $G$ and sets $C_{\text{ext}}
  \subset C$, $C \in \mathcal C$ has a c-planar embedding in which
  $C_{\text{ext}}$ is incident to the outer face of $G[C]$ for any $C
  \in \mathcal C$ if and only if the root of the SPQR-tree of $G$ can
  be chosen such that it is not inappropriate
  for $C \in \mathcal C$ 
  and
  all matrices $M_\rho$ fulfill the consecutive-ones property.
\end{theorem}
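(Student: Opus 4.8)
The plan is to prove the two implications separately, using Lemma~\ref{LEMMA:char} to translate, cluster by cluster, the geometric statement into the two combinatorial conditions on the SPQR-tree, and then to argue that the matrices $M_\rho$ encode exactly the simultaneous realizability of these conditions over all $C\in\mathcal C$ by one embedding. A preliminary step, provable by a routine induction along $T$ that simply follows the labeling rules, is that the computed labels \emph{inside}, \emph{outside}, \emph{border}, \emph{double-border}, and \emph{inappropriate} agree with their semantic definitions, and that whether the $M_\rho$ have the consecutive-ones property does not depend on which fixed initial embedding of $T$ was used to set them up.

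\emph{Necessity.} Suppose $(G,\mathcal C)$ has a c-planar embedding $\mathcal E$ with $C_{\text{ext}}$ incident to the outer face of $G[C]$ for all $C$. Pick an edge $e$ on the outer face of $\mathcal E$, let $r$ be its Q-node, root $T$ at $r$, and take (a representative of) $\mathcal E$ as the initial embedding. For every node $\nu$, the restriction of $\mathcal E$ to $G_r^-(\nu)$ is an embedding with the poles of $\nu$ on the outer face, c-planar for $\{C\}$, with $C_{\text{ext}}$ on the outer face of $G_r^-(\nu)[C]$ (a vertex enclosed there would be enclosed in $\mathcal E$), so no node, and in particular $r$, is inappropriate. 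It then remains to see that the column ordering of each $M_\rho$ induced by $\mathcal E$ makes the $1$s consecutive in every row: the structural rows hold because the lowest-P-children of a node occupy a contiguous interval in $\mathcal E$; the P-node constraint rows $r_0,r_1,r_2$ hold because in a c-planar embedding the children of a P-node form a block of inside children flanked by at most one border child on each side, which is exactly what Condition~1 of Lemma~\ref{LEMMA:char} together with the analysis of the inside/border/outside/double-border labels forces; and the S-node upper/lower-half rows and the external columns hold because, again by Condition~1, all external $C$-paths of an S-node lie on one side, recorded consistently through the same/different constraints. Hence every $M_\rho$ has the consecutive-ones property.

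\emph{Sufficiency.} Conversely, suppose $r$ is chosen so that no node is inappropriate for any $C$ and every $M_\rho$ admits a consecutive-ones ordering; fix one for each $\rho$. By the structural rows, each such ordering encodes a re-embedding of $T_\rho$, and gluing these along the split R-nodes --- each R-node skeleton drawn as in the initial embedding and reflected according to the induced order of the three children of its replacement P-node --- yields a planar embedding $\mathcal E'$ of $G$ with $e$ on the outer face. I then verify, for each $C$, Conditions~1 and~2 of Lemma~\ref{LEMMA:char} at every non-inside node $\nu$, which gives c-planarity for $\{C\}$ with $C_{\text{ext}}$ on the outer face of $G[C]$. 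Condition~1 holds because the S-node half-rows and the external columns force all external $C$-paths of $\nu$ to the side indicated by the root edge of skel$(\nu)$, and the consecutive-ones property of the matrix is precisely what guarantees that the single side choice per S-node can be met for all clusters critical there simultaneously. For Condition~2, a simple cycle of non-outside edges enclosing a non-inside edge or a vertex of $C_{\text{ext}}$ is excluded at P-nodes by the rows $r_0,r_1,r_2$, at S-nodes because such a node cannot be double-border once it has an external $C$-path and $r$ is not inappropriate, and at R-nodes precisely because the R-node is not inappropriate; as in the proof of Lemma~\ref{LEMMA:char}, any violating $C$-cycle in $G$ induces such a cycle in one single skeleton, so these cases are exhaustive.

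\emph{Main obstacle.} The delicate part is the sufficiency direction. First, one must check that the locally chosen consecutive-ones orderings of the various $M_\rho$ genuinely assemble into one consistent planar embedding of $G$ once the R-node splits are undone. Second, and this is the technical heart, one must verify that the constraint rows --- in particular the same/different constraints feeding the upper/lower-half rows of the S-nodes --- are engineered so that $M_\rho$ has the consecutive-ones property if and only if a globally consistent ``which side'' assignment exists for every S-node and every cluster critical there. A secondary difficulty is Condition~2 of Lemma~\ref{LEMMA:char} for cycles passing through R-node skeletons: these are not captured by any matrix and must be ruled out entirely by the ``not inappropriate'' status of the R-nodes, so one has to confirm that the R-node labeling rules were stated strongly enough.
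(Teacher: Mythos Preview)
Your plan is correct and follows essentially the same route as the paper: reduce both implications to Lemma~\ref{LEMMA:char} and argue that the matrices $M_\rho$ encode exactly Conditions~1 and~2 simultaneously for all clusters. The paper's proof fills in precisely the two obstacles you flag --- the sufficiency direction is done by an explicit induction along the $\nu$--$r$ path that traces any pair of potentially conflicting $C$-paths through the tree (your one-line ``S-node half-rows and external columns force all external $C$-paths to one side'' expands into a three-case analysis there), and the necessity direction ends with exactly the fix-up you anticipate: the column order read off from a c-planar embedding makes all $1$s consecutive \emph{up to} the arbitrary upper/lower choice at border S-nodes, and the same/different constraints guarantee a single further permutation repairs all clusters at once.
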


\begin{proof}
  Let $(G,\mathcal C)$ be an overlapping clustered graph.  Let the
  SPQR-tree $T$ of $G$ be rooted at the Q-node $r$, and let $e$ be the
  edge represented by $r$.

  \textbf{Assume first that the columns of all matrices $M_\rho$ are
    permuted such that in each row the 1s are consecutive.}  
  We may assume without loss of generality that the external columns
  were not permuted.
  Starting
  from $\rho=r$, we traverse $T$ and do the following at a non-leaf
  node $\nu$.  If $\nu$ is a P-node, we permute the children
  $\nu_1,\dots,\nu_k$ of $\nu$ according to the ordering of
  $c(\nu_1),\dots,c(\nu_k)$ in the permuted matrix $M_\rho$.
  
  If $\nu$ is an R-node, we fixed an embedding of $G_r(\nu)$ and
  replaced $\nu$ with a P-node and three incident Q-nodes $\nu_1$,
  $\nu_2$, $\nu_3$ in this order.  If $\nu$ was labeled inside or outside
  for all clusters then we maintain the fixed flip of $G_r(\nu)$.
  Otherwise the labeling was such that $c(\nu_2)$ will remain between
  $c(\nu_1)$ and $c(\nu_3)$. We maintain the fixed embedding of
  $G_r(\nu)$ if $c(\nu_1)$ remains before $c(\nu_3)$ after the
  permutation and flip $G_r(\nu)$ otherwise.  If we flip $G_r(\nu)$,
  we also reverse all matrices for all non-leaf nodes in the subtree rooted
  at $\nu$ that are children of an R-node.
  Finally,  we embed $e$ to
  the right of $G^-_r(r)$ if the external column of $M_r$
  is on the right hand side of $M_r$ and to the left otherwise.
  
  We show that this yields a c-planar embedding for $\mathcal C$: Let
  $C \in \mathcal C$ and let $\nu_1$ be a non-inside node of $T$.  We
  show by induction on the length of the $\nu_1$-$r$-path that all
  external $C$-paths of $\nu_1$ are on the same side and that no
  non-inside edge and no vertex in $C_\text{ext}$ is enclosed by a
  simple cycle of non-outside edges
  in skel$(\nu_1)$~--~provided that the root edge of skel$(\nu)$ is
  embedded on the same side as the external $C$-paths of $\nu_1$.
  
\begin{figure}
  \begin{minipage}[b]{.46\linewidth}
     \centering{\includegraphics[page=4]{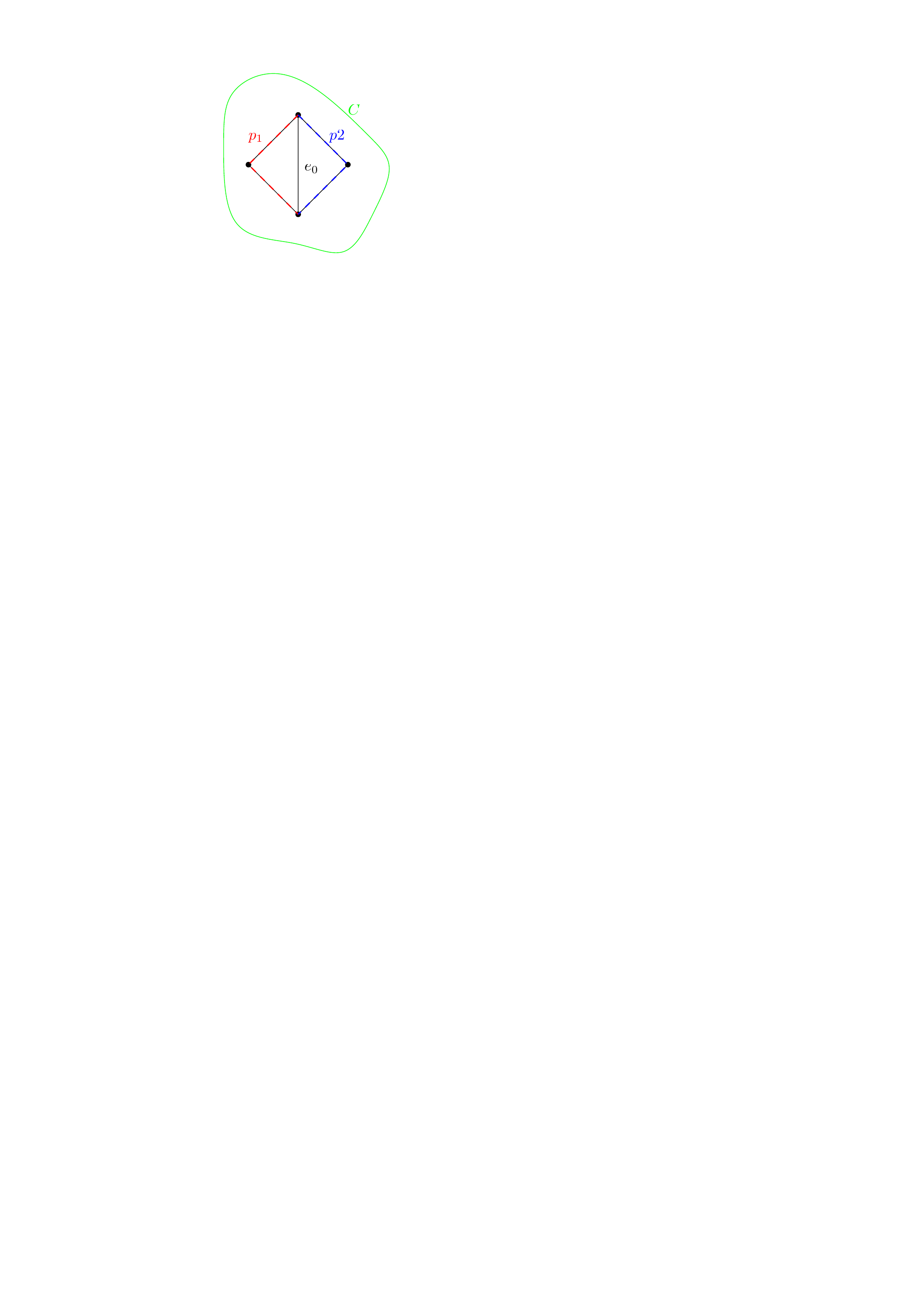}}%
     \subcaption{}\label{FIG:caseA}
  \end{minipage}%
  \begin{minipage}[b]{.26\linewidth}
     \centering{\includegraphics[page=5]{proof_nu1}}%
     \subcaption{}\label{FIG:caseB}
  \end{minipage}%
  \begin{minipage}[b]{.28\linewidth}
     \centering{\includegraphics[page=6]{proof_nu1}}%
     \subcaption{}\label{FIG:caseC}
  \end{minipage}
  \caption{\label{FIG:threeCases}The three forbidden cases in the
    proof of Theorem~\ref{THM:2-connected}, first direction.}
\end{figure}
  First observe that if $\nu_1$ is an R-node then skel$^-(\nu_1)$ does not
  contain a simple cycle of non-outside edges that encloses a
  non-inside edge, otherwise $\nu_1$ would be inappropriate. 
  Let $e_0$ be a non-inside edge of skel$(\nu_1)$ or let $v_0 \in
  C_{\text{ext}}$ be a vertex of skel$(\nu_1)$ other than the poles
  and let $e_0$ be an edge of skel$(\nu_1)$ incident to $v_0$.  Let
  $\nu_0$ be the child of $\nu_1$ corresponding to $e_0$. Let $p_1$
  and $p_2$ be two paths in $G[C]$ with one of the following
  properties (see Fig.~\ref{FIG:threeCases} for an illustration): (a)
  $\nu_1$ is a P-node and there are two children $\mu_1 \neq \nu_0
  \neq \mu_2$ of $\nu_1$ such that $p_j$, $j=1,2$ is a path in
  $G^-_r(\mu_j)$ between its poles, or (b) $p_1$ is a path in
  $G^-_r(\nu_1)$ between its poles that intersects $G_r(\nu_0)$ at
  most in its poles and $p_2$ is an external $C$-path of $\nu_1$, or
  (c) $p_1$ and $p_2$ are both external $C$-paths of $\nu_1$. We have
  to prove that the cycle composed by $p_1$ and $p_2$ does not enclose
  $G_r(\nu_0)$.

  Let $\nu_1,\dots,\nu_\ell=r$ be the $\nu_1$-$r$-path.  Let $j\in\{1,2\}$. If $p_j$ in $G^-(\nu_1)$ let $i_j = 1$. Otherwise
  let $2 \leq i_j \leq \ell$ be minimum such that $\nu_{i_j}$ is an
  R-node or $G^-_r(\nu_{i_j})$ contains $p_j$. We may assume that $i_1
  \leq i_2$. If $\nu_{i_j}$ is an R-node, we actually redefine
  $\nu_{i_j}$ to be the root $\rho$ of the tree containing
  $\nu_{i_j-1}$: we replace $p_j$ by the respective path in $G_r(\nu_{i_j-1})$
  through the root edge $e_\rho$ of skel$(\nu_{i_j-1})$. If
  $\nu_1$ was an R-node we redefine $\nu_1$ to be the special P-node
  with which we replaced the R-node and we redefine $\nu_0$ to be one
  of the artificial non-inside Q-nodes we appended to $\nu_1$.

  Observe that $\nu_{i_j}$ is either $\rho$ or a P-node and $p_j$ is
  composed by two $C$-paths $p_j^1$ and $p_j^2$ connecting the poles
  of $G_r(\nu_1)$ with the poles of $G_r(\nu_{i_j})$ and a middle $C$-path
  $p_j'$. $p_j^1$ and $p_j^2$ are empty if $i_j = 1$. $p_j'$ consists
  of the edge $e_\rho$ if $\nu_{i_j}=\rho$. If $\nu_{i_j}$ is a P-node then
  it has a non-outside child $\mu_j \neq \nu_{i_j - 1}$ such that
  $p'_j$ is a path in $G^-_r(\mu_j)$ between its poles.

  We distinguish some cases. (1) If $\nu_{i_1} = \rho$ or if
  $\nu_{i_1}=\nu_{i_2} \neq \rho$ and $\mu_1 = \mu_2$ then $p_1$ and
  $p_2$ are trivially on the same side of $G_r(\nu_1)$. (2) Assume
  that $\nu_{i_1}=\nu_{i_2} \neq \rho$ and $\mu_1 \neq \mu_2$. Since the
  1s are consecutive in the rows inserted for $\nu_{i_j}$ the two
  non-outside children $\mu_1$ and $\mu_2$ must be on the same side of
  the non-inside child $\nu_{i_j-1}$. (3) Otherwise, observe that the
  $C$-paths $p_2^1$ and $p_2^2$ connecting the poles of $G_r(\nu)$ with the
  poles of $G_r(\nu_{i_2})$ must contain the poles of $G_r(\nu_{k})$,
  $k=1,\dots,{i_2}$. This implies especially that for each
  $k=i_1,\dots,i_2$ the graphs $G^-_r(\nu_k)$ contain a $C$-path
  connecting their poles: such a $C$-path can be composed by $p_1'$
  and portions of $p_2^1$ and $p_2^2$. Hence, $\nu_k$,
  $k=i_1,\dots,i_2$ is not outside. Further a subpath of $p_2$ is an
  external $C$-path of $\nu_{i_1}$. Hence, $\nu_{i_1}$ cannot be
  double-border, since otherwise the root would be inappropriate
  for~$C$.

\begin{figure}
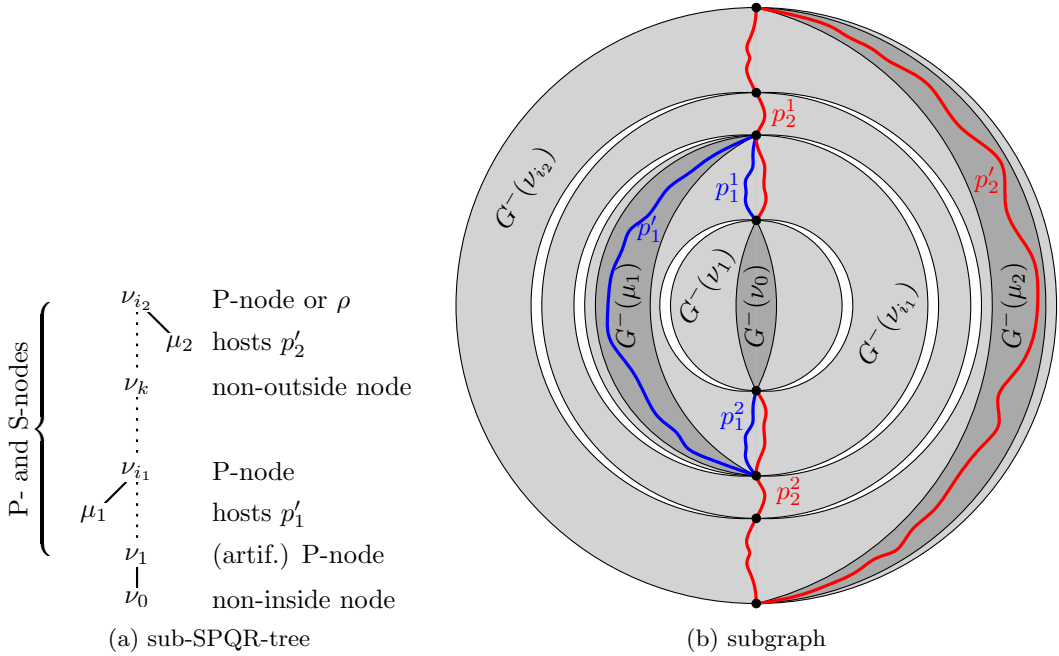

  \begin{minipage}[b]{.43\linewidth}
     \centering{\includegraphics[page=7]{proof_nu1}}%
     \subcaption{sub-SPQR-tree}\label{FIG:treeIIIc}
  \end{minipage}%
  \begin{minipage}[b]{.57\linewidth}
     \centering{\includegraphics[page=9]{proof_nu1}}%
     \subcaption{subgraph}\label{FIG:subgraphIIIc}
  \end{minipage}
\caption{Case (3) in the proof of Theorem~\ref{THM:2-connected}, first direction.}
\end{figure}

  Since $\nu_{i_1-1}$ is non-inside and $\nu_{\mu_1}$ is border or
  inside there is a row $\kappa$ inserted for $\nu_{i_1}$ that
  contains only 0s in $c(\nu_{i_1-1})$ and only 1s in
  $c(\mu_1)$. Further, when we handled $\nu_{i_2}$, we added 1s in the
  row $\kappa$ and the external column (if $\nu_{i_2}=r$) or the
  columns $c(\mu_2)$ (otherwise). Hence, since the 1s must be
  consecutive in $\kappa$, it follows that $c(\nu_{i_1-1})$ cannot be
  between $c(\mu_1)$ and the external column $c_C(\rho)$ or $c(\mu_2)$,
  respectively. Hence, $p_1$ and $p_2$ must be on the same side of
  $G_r(\nu_0)$.

  Now, if $\nu_{i_2}\neq \rho$ we are done. Otherwise let $\nu$ be the
  parent 
  R-node
  of $\nu_{i_2-1}$ in $T$. By induction, we already know that
  all external $C$-paths of $\nu$ are on the same side and that
  $e_{\nu_{i_2-1}}$ is not enclosed by a simple cycle of non-outside
  edges in skel$(\nu)$. Hence, the external $C$-paths of $\nu_{i_2-1}$
  are all on the same side and by construction this is represented by
  the external column $c_C(\rho)$.

  \textbf{Assume now that a c-planar embedding with $e$ on the outer
    face is given in which $C_{\text{ext}}$ is on the outer face of
    $G[C]$ for every cluster $C$.} This yields a permutation of the
  children of the $P$-nodes of $T$ and flips of the R-nodes.  Permute
  the columns of the matrices accordingly. Let $\rho$ be the root of a
  split off tree $T_\rho$ and let $\rho'$ be the only child of $\rho$
  in $T_\rho$.  The external columns of $M_\rho$ are exchanged if on
  the $\rho'$-$r$-path there are an odd number of R-nodes that are
  flipped.

  Recall that we have inserted up to three rows for each P-node and
  each cluster and no other rows into the matrices. Let $\nu$ be a
  P-node in a subtree $T_\rho$ and let $C$ be a cluster such that we
  have created a row $\kappa$ for $\nu$ and $C$ in $M_\rho$. Then
  $\nu$ has no double-border child. Due to c-planarity and the
  condition on all $C_{\text{ext}}$, the children of $\nu$ must be
  permuted such that all inside children are consecutive pre- and
  succeeded by at most one border child and arbitrary many outside
  children. It follows that the 1s in columns $c(\nu)$ must be
  consecutive.

  Let $\nu = \nu_1,\dots,\nu_\ell=\rho$  be the path from
  $\nu$ to the root of $T_\rho$, and let $1 < k \leq \ell$ be maximum
  such that $\nu_1,\dots,\nu_k$ are not outside. 
  If $\nu_1$ was not a special $P$-node substituting an R-node then 
  $\nu_i$
  is a P-node if $i$ is odd and an $S$-node if $i$ is even. 
  (Otherwise it might be vice versa, but the situation is similar) 
  $\nu_k$ is
  a P-node if $\nu_k \neq \rho$. Also observe that $c(\nu_{i-1})
  \subseteq c(\nu_i)$, $i=2,\dots,\ell$ and that for each
  $i=1,\dots,\ell$ the columns in $c(\nu_i)$ are consecutive in the
  permuted matrix. If $k < \ell$, we've set $r(\nu_{k+1},C) =
  \emptyset$. Hence, the entries in row $\kappa$ are 0 in all columns
  in $c(\nu_{\ell-1}) \setminus c(\nu_k)$.

  \begin{window}[4,r,\includegraphics[page=8]{proof_nu1},] 
  \indent
  We consider first a P-node $\nu_i$, $i=3,\dots,k$ odd.  Since
  $\nu_{i-1}$ is not outside it follows that no child of $\nu_i$ other
  than $\nu_{i-1}$ can be double-border. Hence, for each non-outside
  child $\mu \neq \nu_{i-1}$ of $\nu_i$ there are 1s in row $\kappa$
  and all columns in 
  %
  $c(\mu)$.
  Observe
  that due to c-planarity the non-outside children of $\nu_i$ are
  consecutive. Moreover, if there are both, non-outside children of
  $\nu_i$ to the right and the left of $\nu_{i-1}$ then $\nu_{i-1}$ is
  inside and, thus, all columns in $c(\nu_1)$ as well as
  $c(\nu_{j}) \setminus c(\nu_{j-1})$ have entry 1 in row $\kappa$
  for all $3 \leq j < i$ odd. 

  \indent
  If $\nu_k = \rho$, let $m = k = \ell$ and assume that the external
  $C$-paths of $\nu_{\ell-1}$ are all to the right (left) of
  $\nu_{\ell-1}$, i.e., the column $c_C(\rho)$ is the right (left)
  external column.
  If $\nu_k \neq \rho$, let $m \leq k$ be maximum such that $\nu_m$ is
  a P-node and has a non-outside child other than $\nu_{m-1}$ (If no
  such P-node exists then all entries in row $\kappa$ other than in
  the columns $c(\nu)$ are zero and thus all 1s are consecutive.)
  Assume that $\nu_m$ has a non-outside child $\mu$ to the right
  (left) of $\nu_{m-1}$.  Assume now that
  there is a $1 \leq j < m$ odd such that the P-node $\nu_j$ has a
  child $\mu'$ to the right (left) of $\nu_{j-1}$. I.e., the columns
  $c(\mu')$ are between the columns $c(\nu)$ and $c(\mu)$. If $\mu'$
  were not inside then $G_r(\mu')$ would contain a vertex in $V
  \setminus (C \setminus C_{\text{ext}})$ that would be enclosed by a
  $C$-cycle composed by the following four paths: (1) A $C$-path in
  $G^-_r(\nu_{j-1})$ between its poles, (2+3) two $C$-paths connecting
  the poles of $G_r(\nu_{m-1})$ with the poles of $G_r(\nu_j)$, and
  (4) either an external $C$-path of $\nu_{\ell-1}$, if $\nu_{m}=\rho$
  or a $C$-path in $G_r(\mu)$ between its poles, if $\nu_m$ is a
  P-node. Hence, the entries in $c(\mu')$ are all 1.
  \end{window}

  Consider now an S-node $\nu_i$, $i=2,\dots,k$ even that has an
  external $C$-path. By the choice of $k$, $\nu_i$ is not
  outside. Since the root is not inappropriate, $\nu_i$ is not
  double-border. Thus, we've set the entries in row $\kappa$ and columns
  $c(\nu_i) \setminus c(\nu_{i-1})$ to 1 if $\nu_i$ is
  inside. Otherwise, we set the entries in $c(\nu_i) \setminus
  c(\nu_{i-1})$ that are to one side of $c(\nu_{i-1})$ to 1. Observe
  that an $S$-node $\nu_i$ has an external path if and only if
  $\ell=k$ or 
  $i < m$. 

  Hence, row $\kappa$ looks as follows.  Assume without loss of
  generality that $c(\nu_m)\setminus c(\nu_{m-1})$ contains a 1 to
  the right of $c(\nu_{m-1})$. Then the entries in $c(\nu)$
  are ordered such that all 0s (if any) are to the left and all 1s are
  to the right. Moreover, if $\nu$ is inside let $1 \leq b \leq k$ be
  maximal such that $\nu_b$ is inside. Then all entries in columns
  $c(\nu_b)$ are 1. Otherwise let $b=1$.
  For $i=b+1,\dots,m-2$ odd all entries in $c(\nu_i) \setminus
  c(\nu_{i-1})$ that are on the right side of $c(\nu_{i-1})$ are
  1. For $i=b+1,\dots,m-1$ even, all entries in $c(\nu_i) \setminus
  c(\nu_{i-1})$ on one side of $c(\nu_{i-1})$ are 1~--~however, for
  some $i$ that could be the right-hand side and for others the
  left-hand side. Finally, the entries in $c(\nu_m) \setminus
  c(\nu_{m-1})$ to the right of $c(\nu_{m-1})$ are ordered such
  that the 1s are to the left and the 0s (if any) are to the right.
  See Fig.~\ref{FIG:cons_ones}.

  \begin{figure}
    \begin{center}
      {\includegraphics[width=\textwidth,page=3]{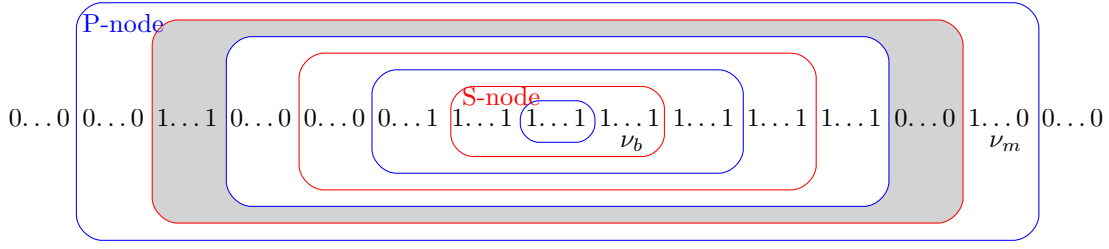}}
    \end{center}
    \caption{\label{FIG:cons_ones}Possible ordering of the 0s and 1s in a row inserted for a P-node $\nu_1$ according to a c-planar embedding.}
  \end{figure}
  Hence, the 1s in row $\kappa$ are consecutive up to maybe the wrong
  choice of the side for the 1s inserted for border S-nodes.  Observe,
  however, on one hand that we could remove now the 1s from the wrong
  side and insert them on the right side and would thus obtain the 1s
  consecutive. We could obtain that for one cluster also by permuting
  the columns for the children of the S-node accordingly. On the other
  hand the assignment to sides was forced by the same and different
  constraints~--~up to the choice for one cluster. Hence, if we do
  the permuting that works for one cluster it'll create the feasible
  assignment we'd obtain if we'd assign the sides now that we knew were
  the external paths are embedded.
\end{proof}



\section{C-Connected Clusterings on Arbitrary Graphs}\label{SEC:c-connected-BC}
Let $(G,\mathcal C)$ be a c-connected overlapping clustered graph with
underlying planar graph $G$.
We show how to extend the method from the last section to work for an
arbitrary planar graph $G$. If $G$ is not connected, we can test each
connected component separately, since the c-connectivity limits each
cluster to a single component.

It remains the case, where $G$ is connected but not 2-connected and
can thus be represented by a BC-tree. We consider the BC-tree of $G$
rooted at a block $H_r$ (meaning that $H_r$ should contain an edge
incident to the outer face of $G$ in a planar drawing). Let $H$ be a
block of $G$. If $H \neq H_r$ then the \emph{parent cut vertex} of $H$
is the cut vertex of $H$ on the path from $H$ to $H_r$.  $H$ 
is a \emph{child block} of its parent cut vertex. All other cut
vertices of $H$ are called \emph{child cut vertices} of $H$. All
cut vertices of $H_r$ are \emph{child cut vertices} of $H_r$.

Consider the SPQR-tree $T$ of $H$. If $H = H_r$, any root of $T$ is
\emph{suitable}. Otherwise the parent cut vertex of $H$ must be on the
outer face of $H$. Thus, a root of the SPQR-tree $T$ is
\emph{suitable} if it corresponds to an edge incident to the parent
cut vertex of $H$. See Fig.~\ref{FIG:BC-tree}.

Let $H_1,\dots,H_k$ be the child blocks of $v$ and let $V_i$,
$i=1,\dots,k$ be the set of vertices in the connected components of
$G-v$ containing $H_i$.  We call a cluster $C$ \emph{relevant} for a
child block $H_i$, if $v \in C$ and $V_i \not \subseteq C$. Let
$C_{\text{ext}}$ be the set of child cut vertices $v$ of $H$ such that $C$
is relevant for a child block of $v$.


Use the algorithm for 2-connected graphs, restricting the
roots for the SPQR-trees to be suitable, to test whether there is some c-planar
embedding for each block $H$ with the parent cut vertex on the outer
face of $H$ and $C_{\text{ext}}$ on the outer face of $H[C]$.
For each child cut vertex $v$ of a block $H$ and for each child block
$H_i$ of $v$, test whether there is a free face, i.e., a face $f$ of $H$ incident to $v$
such that the boundary of $f$ contains a vertex not in $C$ for any
cluster $C$ that is relevant for $H_i$. If so, the c-planar embeddings
of the blocks can be combined into a c-planar embedding of the whole
graph. In the following, we show that otherwise there is no c-planar
drawing for the whole graph with the given choices of the root of the
BC-tree and the roots of the SPQR-trees. 

\begin{figure}[t!]
  \begin{minipage}[b]{.33\linewidth}
     \centering{\includegraphics[page=2]{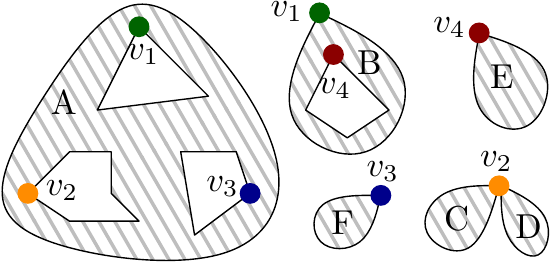}}%
     \subcaption{rooted BC-tree}
  \end{minipage}
    \begin{minipage}[b]{.45\linewidth}
     \centering{\includegraphics[page=1]{BC-tree}}%
     \subcaption{the corresponding embedding}
  \end{minipage}
	\caption{\label{FIG:BC-tree}The hierarchy in the BC-tree given by the choice of the root}
\end{figure}

Given a c-planar embedding, a face $f$ is \emph{free} with respect to
a subset $\mathcal{C}' \subseteq \mathcal{C}$ of clusters if $f$ is
not enclosed by a $C$-cycle for any $C \in \mathcal{C}'$. Otherwise,
$f$ is \emph{covered} by $\mathcal{C}'$. In the following we use $C$
instead of $\{C\}$ if the context is clear. 

\begin{remark}
A face is covered by $C$ if and only if its boundary is a $C$-cycle.
\end{remark}

A vertex $v$ is
\emph{free} with respect to a subset $\mathcal{C}' \subseteq
\mathcal{C}$ of clusters if one of its incident faces is free with
respect to $\mathcal{C}'$ and \emph{covered} by $\mathcal{C}'$ otherwise. We
call a cut vertex $v$ \emph{free} for a child block $H_i$, if $v$ is
free with respect to the set of clusters that are relevant for $H_i$.

Given a vertex $v$ in a block $H$, we call two incident edges $e_1$
and $e_2$ of $v$ \emph{equivalent} with respect to a set
$\mathcal{C}'$ of clusters, if they are in the same block of
$H[\bigcap_{C \in \mathcal{C}'} C]$, i.e. if there is a simple cycle
in $H$ that is a $C$-cycle for any $C \in \mathcal{C}'$ and contains
both, $e_1$ and $e_2$.
A \emph{$\mathcal C'$-equivalence class} around $v$ is a maximal set
of edges incident to $v$ that are pairwise equivalent with respect to
$\mathcal C'$.

\begin{lemma}\label{LEMMA:free_vertex_one}
  Let $v$ be a vertex of a block $H$ and let $\mathcal C' \subseteq
  \mathcal C$. Then a $\mathcal C'$-equivalence class around $v$ is 
  a consecutive set in the cyclic order around $v$ in
  any c-planar embedding of~$H$.
\end{lemma}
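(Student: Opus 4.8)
The plan is to translate the statement into a question about blocks of the auxiliary graph $H[D]$, where $D=\bigcap_{C\in\mathcal C'}C$, and then combine the $2$-connectivity of $H$ with the basic no-enclosure property of c-planar embeddings. First I would observe that a $\mathcal C'$-cycle is precisely a cycle of $H[D]$, so two edges incident to $v$ are equivalent with respect to $\mathcal C'$ if and only if they lie in a common block of $H[D]$; hence a $\mathcal C'$-equivalence class around $v$ is exactly the set of edges incident to $v$ of a single block $B$ of $H[D]$ with $v\in B$. If $B$ is a bridge of $H[D]$ the class is a singleton and we are done, so assume $B$ is $2$-connected. The property of a c-planar embedding I will use repeatedly is: \emph{no $\mathcal C'$-cycle $c$ encloses a vertex of $V(H)\setminus D$}; this holds because $c$ is a $C$-cycle for every $C\in\mathcal C'$, so by c-planarity $c$ does not enclose a vertex of $V(H)\setminus C$, and $\bigcup_{C\in\mathcal C'}(V(H)\setminus C)=V(H)\setminus D$.

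Next I would fix a c-planar embedding of $H$; it induces a planar embedding of $B$, and since $B$ is $2$-connected every face of $B$ is bounded by a simple cycle $c_\phi$ of $B\subseteq H[D]$, i.e.\ by a $\mathcal C'$-cycle. Hence every \emph{inner} face of $B$ (every face other than the one containing the outer face of $H$) contains no vertex of $V(H)\setminus D$. The crucial claim is then: for every block $B'\neq B$ of $H[D]$ with $v\in B'$, the connected set $B'-v$ lies inside the outer face of $B$. To see this, note that $B'-v$ is connected and, since distinct blocks of $H[D]$ meet only in $v$, is disjoint from $B$, so it lies in a single face of $B$; if that face were an inner face $\phi$ of $B$, then using that $H$ is $2$-connected (so $H-v$ is connected) there is a path $Q$ in $H-v$ from $B'-v$ to some vertex of $H$ drawn in the outer face of $B$. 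Such a $Q$ must cross $c_\phi$, so traversing $Q$ from its endpoint in $B'-v$ it first meets a vertex $z\in V(c_\phi)\setminus\{v\}\subseteq V(B)\setminus\{v\}$, and the segment of $Q$ up to $z$ stays in the closed disc bounded by $c_\phi$; hence all internal vertices of that segment lie in an inner face of $B$ and are therefore in $D$. That segment is then a path of $H[D]-v$ joining $B'-v$ to $B-v$, contradicting the fact that $B$ and $B'$ are distinct blocks of $H[D]$ and lie in different components of $H[D]-v$.

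Given the claim, I would conclude as follows. Every edge of $H$ incident to $v$ that does not belong to $B$ is drawn at $v$ inside the outer face of $B$: if its other endpoint lies in $D$, the edge belongs to a block $B'\neq B$ of $H[D]$ and reaches $B'-v$, which by the claim is in the outer face of $B$; if the other endpoint is not in $D$, the edge leads from $v$ into a face of $B$ containing a vertex of $V(H)\setminus D$, which can only be the outer face of $B$. Since $B$ is $2$-connected, the boundary of each face of $B$ is a simple cycle, so each face of $B$ meets $v$ in exactly one corner; in particular the outer face of $B$ occupies a single corner at $v$ of the embedding. Therefore, in the rotation of $v$ in $H$, all edges not in $B$ appear inside this one corner, i.e.\ between two fixed consecutive edges of $B$, and hence the edges of $B$ at $v$ — that is, the given $\mathcal C'$-equivalence class — form a consecutive block of the cyclic order around $v$, as claimed.

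The main obstacle is the crucial claim in the middle paragraph: ruling out that a different block of $H[D]$ at $v$ is hidden inside an inner face of $B$. This is the one place where $2$-connectivity of the whole block $H$ is genuinely used, and the delicate point is to argue that a path escaping from such an inner face must remain inside the bounding $\mathcal C'$-cycle long enough to force its internal vertices into $D$, which then contradicts the block structure of $H[D]$.
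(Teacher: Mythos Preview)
Your proof is correct in substance. There is one small imprecision in the proof of the crucial claim: you take $Q$ in $H-v$ from $B'-v$ to ``some vertex of $H$ drawn in the outer face of $B$'', but such a vertex need not exist when every vertex of $V(H)\setminus V(B)$ sits in an inner face of $B$. The fix is immediate---target instead any vertex of $c_\phi\setminus\{v\}$ (one exists since $c_\phi$ is a cycle on at least three vertices); the initial segment of $Q$ up to its first vertex on $c_\phi$ then works exactly as you describe.

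Your route is more structural than the paper's. The paper argues directly at the level of single edges: given $e_1,e_2$ at $v$ lying on a common cycle $c$ of $H[D]$ (with $D=\bigcap_{C\in\mathcal C'}C$) and an arbitrary edge $e'=\{v,v'\}$ enclosed by $c$, it uses $2$-connectivity of $H$ to find a $v'$--$v_i$ path $p_i$ in $H-v$, truncates $p_i$ at its first vertex $v_i'$ on $c$, and splices that initial segment with the $v_i'$--$v$ arc of $c$ through $e_i$ and the edge $e'$ to obtain a simple $D$-cycle through $e'$ and $e_i$. This one-step interpolation (``$e_1\sim e_2$ and $e'$ enclosed by their witnessing cycle $\Rightarrow e'\sim e_1$ and $e'\sim e_2$'') already yields consecutiveness. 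Both arguments hinge on the same two ingredients---connectivity of $H-v$ and the fact that the interior of any $D$-cycle lies in $D$---so the difference is one of packaging: the paper's version is a short local argument, while yours makes explicit the global picture that every edge at $v$ outside the equivalence class escapes through the single outer-face corner of the block $B$ at~$v$.
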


\begin{proof}
  Let $\mathcal{C}'$ be a set of clusters, $v$ a vertex in block $H$
  and let $e_i=\{v,v_i\}$, $i=1,2$ be two edges incident to $v$ that
  are contained in a simple cycle $c$ in
  $H[\bigcap_{C \in \mathcal{C}'} C]$.  See
  Fig.~\ref{FIG:lemma5}. Then all vertices that are enclosed by $c$
  are in $\bigcap_{C \in \mathcal{C}'} C$. Let $e'=\{v,v'\}$ be an
  edge enclosed by $c$. Let $i \in \{1,2\}$.  Since $H$ is
  2-connected, there must be a $v'$-$v_i$ path $p_i$ in $H$ not
  containing $v$. Let $v_i'$ be the first vertex of $p_i$ on $c$. Let
  $c_i'$ be the cycle formed by the $v'$-$v_i'$-subpath of $p_i$, the
  $v_i'$-$v$-subpath of $c$ containing $e_i$ and the edge $e'$. Then
  $c_i'$ is a simple cycle in $H[\bigcap_{C \in \mathcal{C}'} C]$
  containing $e'$ and $e_i$.
\end{proof}

Let $v$ be a vertex that is free with respect to any $C \in
\mathcal{C}'$. Given a c-planar embedding of a block, a
\emph{$\mathcal{C}'$-interval} around a vertex $v$ is a maximal
sequence of consecutive edges around~$v$
that are (a) equivalent with respect to $\mathcal C'$ and  such that
(b)
the face between any two consecutive edges is covered by
  $C$ for all $C \in \mathcal{C}'$.
Note that there is a one-to-one correspondence between the $\mathcal
C'$-equivalence classes and the $\mathcal{C}'$-intervals around $v$: the
condition that $v$ is free with respect to any cluster in
$\mathcal{C}'$ guarantees that the $\mathcal{C}'$-intervals have a
well defined start and end point.  Also note that there might be
several distinct $\mathcal{C}'$-intervals around~$v$~--~even if
$\mathcal{C}'$ contains only one cluster.

\begin{lemma}\label{LEMMA:free_vertex}
Let $H$ be a block, $v$ a vertex in $H$, and $\mathcal{C}' \subseteq  
\mathcal{C}$.
If there is a c-planar
embedding of $H$, in which $v$ is free with respect to $\mathcal{C}'$,
then $v$ is free with respect to $\mathcal{C}'$ in any c-planar
embedding of $H$ in which $v$ is free with respect to $C$ for all $C
\in \mathcal{C}'$.
\end{lemma}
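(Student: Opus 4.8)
The plan is to reduce ``free with respect to $\mathcal C'$'' to an embedding-independent combinatorial condition and then to show that this condition is the same for any two c-planar embeddings in which $v$ is free with respect to each $C\in\mathcal C'$. First, one may assume $v\in C$ for every $C\in\mathcal C'$: if $v\notin C$ then, by c-planarity, $v$ lies in the outer face of $H[C]$, so no face incident to $v$ is enclosed by a $C$-cycle, and dropping such $C$ from $\mathcal C'$ changes neither the hypothesis nor the conclusion. Next, combining the Remark (``a face is covered by $C$ iff its boundary is a $C$-cycle'') with Lemma~\ref{LEMMA:free_vertex_one} applied to the singleton $\{C\}$, and with the observation that, since $v$ is free with respect to $C$, no block of $H[C]$ at $v$ uses all edges incident to $v$, one obtains: for two cyclically consecutive edges $e,e'$ around $v$, the face between them is covered by $C$ if and only if $e$ and $e'$ lie in a common block of $H[C]$. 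Hence, in any c-planar embedding in which $v$ is free with respect to every $C\in\mathcal C'$, the vertex $v$ is free with respect to $\mathcal C'$ if and only if around $v$ some cyclically consecutive pair $\{e,e'\}$ lies in no common block of $H[C]$ for any $C\in\mathcal C'$; call such a pair \emph{$\mathcal C'$-separated}. Whether a pair is $\mathcal C'$-separated depends only on $H$ and $\mathcal C'$, not on the embedding.

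It remains to prove: if a $\mathcal C'$-separated pair is cyclically consecutive around $v$ in one such embedding, then a (possibly different) $\mathcal C'$-separated pair is cyclically consecutive in every such embedding. Suppose not; let $\Gamma_0$ be a c-planar embedding witnessing freeness with respect to $\mathcal C'$ by a $\mathcal C'$-separated consecutive pair flanking a face $g$, and let $\Gamma$ be a c-planar embedding in which no $\mathcal C'$-separated pair is consecutive at $v$ (note that $\Gamma_0$ is itself an embedding in which $v$ is free with respect to each $C\in\mathcal C'$, so the characterization above applies to it). Then in $\Gamma$ every face $f_i$ between cyclically consecutive edges $e_i,e_{i+1}$ around $v$ is covered by some $C^{(i)}\in\mathcal C'$, so $e_i$ and $e_{i+1}$ lie in a common block of $H[C^{(i)}]$; let $S_i$ be the set of edges incident to $v$ in that block. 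The sets $S_i$ satisfy $e_i,e_{i+1}\in S_i$, so $\bigcup_i S_i$ is the set of all edges at $v$, and consecutive $S_i$ share an edge; moreover, by Lemma~\ref{LEMMA:free_vertex_one}, each $S_i$ is a consecutive arc around $v$ in $\Gamma_0$ as well. The goal is then to deduce that, arranged according to $\Gamma_0$, the arcs $S_i$ still cover every cyclically consecutive pair at $v$ — which would make $g$ covered by the corresponding $C^{(i)}$, contradicting that $g$ witnesses freeness with respect to $\mathcal C'$.

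The main obstacle is exactly this last deduction: the cyclic order of the blocks around $v$ may genuinely differ between $\Gamma$ and $\Gamma_0$, so one cannot transport the pair $(e_i,e_{i+1})$ directly. One has to exploit that the family $\{S_i\}$ covers the rotation ``with overlaps'' (consecutive members share an edge, a combinatorial fact once the chain is established) and that every member — and, via Lemma~\ref{LEMMA:free_vertex_one}, every $\mathcal C'$-equivalence class and, more generally, every block-at-$v$ of every cluster — must be a consecutive arc in $\Gamma_0$; I expect one then shows that this system of consecutive-arc constraints together with the cyclic overlap pattern forces any realization to cover all adjacencies, essentially pinning down the rotation at $v$ up to reflection. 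I anticipate that carrying this out requires a careful case analysis on how the blocks of the various clusters interleave around $v$ — in particular treating bridges of $H[C]$ at $v$ separately from larger ($2$-connected) blocks, and accounting for clusters outside $\mathcal C'$ whose blocks may lie between the $\mathcal C'$-blocks.
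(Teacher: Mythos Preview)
Your reduction to an embedding-independent characterization and the passage to the contrapositive are sound, and they mirror the paper's setup: the paper too assumes a c-planar embedding in which $v$ is free for each $C\in\mathcal C'$ but covered by $\mathcal C'$, and then shows this coverage persists in every c-planar embedding. Where you stop is exactly the crux, and the ``careful case analysis on how the blocks of the various clusters interleave'' that you anticipate is not what the paper does; it has a cleaner trick that you are missing.

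The paper does not take one block $S_i$ per face. Instead it passes to a \emph{minimal} family $\mathcal I=\{I_1,\dots,I_\kappa\}$ of $C$-intervals (over all $C\in\mathcal C'$) covering all faces around $v$. Minimality forces the $I_i$ to be cyclically arranged with each consecutive pair $I_i,I_{i+1}$ properly overlapping (neither contained in the other). The key observation is that every face in the overlap $I_i\cap I_{i+1}$ is bounded by a cycle in $H[C_i]\cap H[C_{i+1}]$, so the overlap edges form a $\{C_i,C_{i+1}\}$-interval, hence a $\{C_i,C_{i+1}\}$-equivalence class. Now one has \emph{two} levels of embedding-independent arcs: each $I_i$ is a $C_i$-equivalence class, and each overlap $O_i:=I_i\cap I_{i+1}$ is a $\{C_i,C_{i+1}\}$-equivalence class; by Lemma~\ref{LEMMA:free_vertex_one} both kinds are consecutive in every c-planar embedding. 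Since $I_i$ decomposes as $O_{i-1}\,\dot\cup\,M_i\,\dot\cup\,O_i$ with $O_{i-1},O_i$ disjoint (this is where minimality is used), consecutivity of $I_i$ forces $M_i$ to sit between $O_{i-1}$ and $O_i$ in any rotation at $v$. Chaining over $i$ reproduces the same overlapping cyclic pattern of intervals in every embedding, so every face around $v$ is again covered by some $C\in\mathcal C'$.

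Your family $\{S_i\}$, with one block per face, lacks this rigidity: two consecutive $S_i,S_{i+1}$ may come from unrelated clusters, so $S_i\cap S_{i+1}$ need not be an equivalence class for any subset of $\mathcal C'$, and you have no handle forcing their relative position in $\Gamma_0$. The fix is precisely to thin your cover to a minimal one and then exploit that the overlaps themselves become equivalence classes for the pair of clusters involved. No case analysis on bridges or clusters outside $\mathcal C'$ is needed.
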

\begin{proof}
  Assume that there is a c-planar embedding of $H$ in which $v$ is
  free with respect to $C$ for all $C \in \mathcal{C}'$ but $v$ is not
  free with respect to $\mathcal C'$. Consider the cyclic order
  $e_1, \dots e_\ell$ of the edges around $v$. Since $v$ is not
  covered by any $C \in \mathcal{C}'$, the $C$-intervals around $v$
  are well defined.  Among all $C$-intervals for all
  $C \in \mathcal{C}'$, let $\mathcal{I}$ be a minimal set of
  intervals such that all faces around $v$ are covered by at least one
  interval in $\mathcal{I}$. Let
  $I_i = \left<e_{s_i}, \ldots, e_{t_i}\right>, i = 1,\ldots,\kappa$
  be the intervals in $\mathcal{I}$ in cyclic order around $v$. See
  Fig.~\ref{FIG:lemma6}. We assume that $s_1 = 1$, $s_i < t_i$ for
  $i = 1, \ldots, \kappa -1$, and $t_\kappa > s_\kappa$. For
  simplicity, we set $s_{\kappa + 1} := s_1$. Let
  $C_i \in \mathcal{C}'$ be such that $I_i$ is a $C_i$-interval.
  Since all faces around $v$ are covered, it holds that
  $s_{i+1} \leq t_{i}$.
  $\left< e_{s_{i+1}}, \ldots, e_{t_{i}} \right>$ is a
  $\{C_i, C_{i+1}\}$-interval (Let $s_{i+1} \leq j < t_i$. The face
  $f$ between $e_j$ and $e_{j+1}$ is covered by both, $C_i$ and
  $C_{i+1}$. Hence the boundary of $f$ is a both a $C_i$- and a
  $C_{i+1}$-cycle). Thus $\{ e_{s_{i+1}}, \ldots, e_{t_{i}} \}$ is
  consecutive in any c-planar embedding. Hence, in any c-planar
  embedding, the ordering of edges around $v$ is as
  follows\\[1em]

\begin{figure}
  \begin{minipage}[b]{.45\linewidth}
     \centering{\includegraphics[page=1]{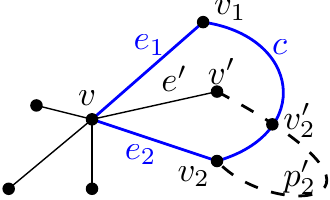}}%
     \subcaption{$e_1 \sim e_2 \Rightarrow e_1 \sim e' \sim e_2$}\label{FIG:lemma5}
  \end{minipage}
  \begin{minipage}[b]{.45\linewidth}
     \centering{\includegraphics[page=2]{proof_general}}%
     \subcaption{Minimal set of intervals covering $v$}\label{FIG:lemma6}
  \end{minipage}
\caption{
  Illustration of the proofs of 
  (a) Lemma~\ref{LEMMA:free_vertex_one} and
  (b) Lemma~\ref{LEMMA:free_vertex}. (The symbol $\sim$ refers to equivalence with respect to the set $\mathcal C'$.)
}
\end{figure}


  \noindent
  $ \displaystyle \overunderbraces { &\br{3}{C_1-\text{interval}} } {
    & \{e_1,\dots,e_{t_\kappa}\},&
    \{e_{t_\kappa+1},\dots,e_{s_2-1}\},& \{e_{s_2},\dots,e_{t_1}\},&
    \{e_{t_{1}+1},\dots,e_{s_{3}-1}\},& \{e_{s_{3}},\dots,e_{t_2}\},&
  } { &&&\br{3}{C_2-\text{interval}} }
  \dots$\\[1em]
  \centerline{ $\dots, \{e_{t_{i-1}+1},\dots,e_{s_{i+1}-1}\},
    \{e_{s_{i+1}},\dots,e_{t_i}\}, \dots$
  }\\[1em]
  \phantom{x} \hfill $\overunderbraces
  {
    &\br{3}{C_{\kappa-1}-\text{interval}}    
  }
  {
    &\dots,&
    \{e_{t_{\kappa-2}+1},\dots,e_{s_{\kappa}-1}\},& 
    \{e_{s_{\kappa}},\dots,e_{t_{\kappa-1}}\},&
    \{e_{t_{\kappa-1}+1},\dots,e_{\ell}\},&
    \{e_1,\dots,e_{t_\kappa}\}
  }
  {
    &&&\br{3}{C_\kappa-\text{interval}}
  }
  $\\[2em]
  and thus, a sequence of overlapping $C$-intervals for some $C \in
  \mathcal C'$. Hence, for any face $f$ incident to $v$ there is at
  least one $C \in \mathcal C'$ such that $f$ is covered by $C$.
  Therefore $v$ cannot be free with respect to $\mathcal C'$ in any
  c-planar embedding.
\end{proof}

We now apply Lemma~\ref{LEMMA:free_vertex} to any child cut vertex $v$
of any block $H$ and to the set $\mathcal C'$ of relevant clusters of
any child block of $v$ to obtain our main result. Observe that the
particular choice of $C_{\text{ext}}$ in the following theorem
guarantees that the child cut vertices are free with respect to any
relevant cluster.

\setcounter{THM:arbitrary}{\value{theorem}}
\begin{theorem}
  A c-connected overlapping clustered graph $(G,\mathcal{C})$ is
  c-planar, if and only if $G$ is planar and for each connected
  component of $G$, there is a root block of its BC-tree for which
  there exist suitable root nodes of the SPQR-tree of each block that
  are not inappropriate for any $C
  \in \mathcal{C}$ with 
  $
  C_{\text{ext}} = 
  \{v;\; v\text{ child cut vertex and } C 
  \text{ relevant for a child block of } v \}
  $ 
  such that
\begin{enumerate}
\item all binary matrices 
  fulfill the consecutive-ones property and
\item given an arbitrary consecutive-ones ordering of the binary matrices each
  cut vertex is free for each of its child blocks in the corresponding
  embedding.
\end{enumerate}
\end{theorem}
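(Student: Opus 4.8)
The plan is to prove the two directions of the equivalence by gluing together, along the BC-tree, the block-level statement already established in Theorem~\ref{THM:2-connected}, and to use Lemma~\ref{LEMMA:free_vertex} to control what happens at the cut vertices. First I would dispose of the disconnected case: since $(G,\mathcal C)$ is c-connected, every cluster is contained in one connected component, so $(G,\mathcal C)$ is c-planar if and only if each component together with the clusters it contains is, and the BC-/SPQR-conditions are stated componentwise; hence I assume from now on that $G$ is connected and fix its BC-tree. Throughout, the only slightly delicate bookkeeping is that ``the outer face of $\mathcal E_H$'' for a block $H$ always means the face of $H$ that contains the outer face of the embedding of $G$ under consideration.

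For necessity I would start from a c-planar embedding $\mathcal E$ of $G$, take $H_r$ to be the block carrying an edge on the outer face of $\mathcal E$ (the witnessing root block), and, for every block $H$, let $\mathcal E_H$ be the restriction of $\mathcal E$ to $H$. The routine observations are that $\mathcal E_H$ is c-planar for every singleton $\{C\}$ and that the parent cut vertex $p$ of $H$ lies on its outer face, because the whole part of $G$ reaching the outer face attaches to $H$ only through $p$ and therefore lies in a face of $H$ incident to $p$. Next, for a child cut vertex $v$ of $H$, a child block $H_i$ of $v$, and a cluster $C$ relevant for $H_i$ (so $v\in C$, $V_i\not\subseteq C$), the whole subtree hanging below $H_i$ sits inside a single face $f$ of $\mathcal E_H$ incident to $v$; that face contains a vertex of $V_i\setminus C$, so by c-planarity of $\mathcal E$ it is not enclosed by any $C$-cycle of $H$. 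Doing this for all $C$ relevant for $H_i$ shows that $v$ is free with respect to the relevant-cluster set $\mathcal C'$ of $H_i$ in $\mathcal E_H$ and that each such $v$ lies on the outer face of $H[C]$, i.e.\ the prescribed $C_{\text{ext}}$ is on the outer face of $H[C]$ in $\mathcal E_H$. Theorem~\ref{THM:2-connected} applied to $H$ with $\mathcal E_H$ then yields a suitable, non-inappropriate SPQR-root (a Q-node for an edge of the outer face, which for $H\neq H_r$ may be chosen incident to $p$) together with the consecutive-ones property of all $M_\rho$, which is Condition~1. For Condition~2 I would take the embedding $\mathcal E_H'$ induced by an arbitrary consecutive-ones ordering; by the other direction of Theorem~\ref{THM:2-connected} it has $C_{\text{ext}}$ on the outer face of $H[C]$, so every $v$ as above is free with respect to each individual $C\in\mathcal C'$ in $\mathcal E_H'$ (as $v\in C_{\text{ext}}$), and since $v$ is free with respect to $\mathcal C'$ in $\mathcal E_H$, Lemma~\ref{LEMMA:free_vertex} upgrades this to freeness with respect to $\mathcal C'$ in $\mathcal E_H'$, which is exactly $v$ being free for $H_i$.

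For sufficiency I would fix a consecutive-ones ordering of all matrices; Theorem~\ref{THM:2-connected} turns it into a planar embedding $\mathcal E_H$ of each block $H$ that is c-planar for every $\{C\}$, has $C_{\text{ext}}$ on the outer face of $H[C]$, and has the parent cut vertex of $H$ on its outer face, while Condition~2 guarantees that every child cut vertex $v$ of $H$ is free for every child block of $v$ in $\mathcal E_H$. I would then assemble an embedding $\mathcal E$ of $G$ top-down along the rooted BC-tree: for each child cut vertex $v$ of an already-placed block $H$ and each child block $H_i$ of $v$, glue the previously assembled subtree below $H_i$ (which has $v$ on its outer face) into a face of $\mathcal E_H$ incident to $v$ that is free with respect to all clusters relevant for $H_i$ — such a face exists by Condition~2, and the relative nesting of several subtrees at the same $v$ is immaterial since each attaches to the rest of $G$ only at $v$. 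It remains to verify that $\mathcal E$ is c-planar, which reduces to showing that no $C$-cycle $c$ encloses a vertex $w\notin C$: being $2$-connected, $c$ lies in a single block $H$, so $c$ is a $C$-cycle of $H$; if $w\in V(H)$ this contradicts c-planarity of $\mathcal E_H$ for $\{C\}$, and otherwise $w$ is in a subtree glued into a face $f$ of $H$ that must lie strictly inside $c$, forcing its attachment vertex $u$ to lie on or strictly inside $c$; if $u\notin C$ this again contradicts c-planarity of $\mathcal E_H$, while if $u\in C$ then $C$ is relevant for the child block of $u$ containing $w$ (since $u\in C$ and that child component meets $V\setminus C$ at $w$), so $f$ was chosen free with respect to $C$ — a contradiction in every case.

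I expect the main obstacle to be precisely this verification step in the sufficiency direction, namely ruling out that a $C$-cycle living inside one block wraps around a vertex that was glued in further down the BC-tree, combined with keeping the ``which face is the outer one'' accounting consistent between a block and the whole graph; these are the places where Condition~2 — and behind it Lemma~\ref{LEMMA:free_vertex} on the necessity side and the precise definition of $C_{\text{ext}}$ in the statement — carry the argument. A minor separate point is the treatment of blocks that are bridges, where the SPQR-tree is trivial and all the freeness statements hold immediately.
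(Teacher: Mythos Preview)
Your proposal is correct and follows essentially the same route as the paper's proof: both directions glue the block-level characterization of Theorem~\ref{THM:2-connected} along the rooted BC-tree, use the chosen $C_{\text{ext}}$ to force child cut vertices onto the outer face of $H[C]$, and invoke Lemma~\ref{LEMMA:free_vertex} in the necessity direction to transfer freeness from the given c-planar embedding to an arbitrary consecutive-ones embedding. The sufficiency verification---locating the block $H$ containing a hypothetical bad $C$-cycle, ruling out $w\in V(H)$ via c-planarity of $\mathcal E_H$, ruling out the parent side via the outer-face bookkeeping, and ruling out the child side via the free-face placement---matches the paper's argument step for step.
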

\begin{proof}

  Assume the two conditions hold. We embed the blocks as in the proof
  of Theorem~\ref{THM:2-connected} and combine the embeddings of the
  blocks as follows. Let $H$ be a block, let $v$ be a child cut vertex
  of $H$ and let $H_i$ be a child block of $v$. We place $H_i$ into a
  face of $H$ incident to $v$ that is free with respect to the set of
  $H_i$'s relevant clusters. This yields a c-planar embedding of $G$:

  Otherwise there must be a cluster $C$ and a vertex $w \in V
  \setminus C$ such that $w$ is enclosed by a
  $C$-cycle $c$. Let $c$ be in block $H$. The first condition
  requires, that the embedding of $H$ is c-planar (see
  Theorem~\ref{THM:2-connected}). Hence, $w$ cannot be a vertex of $H$.
  Let $v'$ be the parent cut vertex of $H$ and let $V'$ be the union
  of the sets of vertices in the connected components of $G-v'$ not
  containing $H$. By the choice of the root of the BC-tree, $V'$ must
  be drawn in the outer face of $H$. Hence $w \notin V'$. 

  Finally, let $v$ be a child cut vertex of $H$, let $H_i$ be a child
  block of $v$, let $V_i$ be the set of vertices in the connected
  components of $G-v$ containing $H_i$, and assume that $w \in V_i$.
  Then $v$ must be enclosed by $c$ and thus, by c-planarity of
  $H$, $v \in C$. Since $w \notin C$ it follows that $C$ is relevant
  for $H_i$.  Since we embedded $H_i$ into a face of $H$ that was free
  with respect to $H_i$'s relevant clusters, it follows that $w$
  cannot be enclosed by the $C$-cycle $c$.

  For the other direction assume now that there is a c-planar
  embedding $\mathcal E$.  Without loss of generality, we assume that
  $G$ is connected. Let the root $H_r$ of the BC-tree be a block with
  an edge that is incident to the outer face of $G$. Root each
  SPQR-tree at an edge incident to the outer face of the respective
  block and incident to the parent cut vertex.

  In a c-planar drawing, a child cut vertex $v$ of a block $H$ is
  placed on the outer face of $H[C]$ for any relevant cluster $C$ of
  any of $v$'s child blocks. Thus, Theorem~\ref{THM:2-connected}
  implies that the roots are not inappropriate and Condition~1 is
  fulfilled.

  Obviously any block must be inserted into a face that is free with
  respect to its relevant clusters in any c-planar embedding of $G$.
  Consider now a block $H$ and an embedding $\mathcal E'$ of $H$
  corresponding to an arbitrary consecutive-ones ordering of the
  binary matrices. Let $v$ be a child cut vertex of $H$ and let $H_i$
  be a child block of $v$. Let $\mathcal C'$ be the set of relevant
  clusters for $H_i$. 

  The labeling guarantees that $v$ is on the outer face of $H[C]$ for any $C \in
  \mathcal C'$. Thus,  $\mathcal E'$ is a c-planar
  embedding of $H$ in which $v$ is free with respect to each $C \in
  \mathcal C'$.
  We further know that $\mathcal E$ induces a
  c-planar embedding of $H$ in which $v$ is free with respect to
  $\mathcal C'$. 
  Hence, Lemma~\ref{LEMMA:free_vertex} implies that $v$ is free with
  respect to $\mathcal C'$ in $\mathcal E'$. 
\end{proof}

The characterization in the previous theorem immediately yields the
following corollary.

\begin{corollary}
  It can be tested in polynomial time whether a c-connected
  overlapping clustered graph is c-planar.
\end{corollary}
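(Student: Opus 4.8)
The plan is to read the preceding theorem as an algorithm and to verify that (i) only polynomially many configurations have to be examined and (ii) examining one configuration costs polynomial time. First test in linear time whether $G$ is planar; if not, reject. Since the clustering is c-connected, every cluster is contained in a single connected component, so the components may be treated independently; fix one and compute its BC-tree in linear time.

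For this component, iterate over the at most $O(|V|)$ choices of a root block $H_r$ of the BC-tree. Once $H_r$ is fixed, the parent/child relation among blocks and cut vertices is determined, and with it, for every block $H$ and every cluster $C$, the set $C_{\text{ext}}$ prescribed in the previous theorem, together with, for every child cut vertex $v$ of $H$ and every child block $H_i$ of $v$, the set of clusters relevant for $H_i$. The key observation is that, with $H_r$ fixed, Conditions~1 and~2 of that theorem decompose over the blocks: the matrices $M_\rho$ attached to a block $H$ depend only on $H$, on $C_{\text{ext}}$, and on the chosen suitable root of the SPQR-tree of $H$, while whether a child cut vertex of $H$ is free for one of its child blocks depends only on the embedding of $H$. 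Hence it suffices, for each block $H$ in turn, to decide whether \emph{some} suitable root of the SPQR-tree of $H$ is not inappropriate for any $C\in\mathcal C$, yields matrices with the consecutive-ones property, and induces an embedding of $H$ in which every child cut vertex is free for each of its child blocks; the instance is accepted exactly when some root block makes every block pass, which is correct by the previous theorem.

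For a single block this is precisely the construction of Section~\ref{SEC:c-connected-SPQR}. Build the SPQR-tree in linear time. For each of the $O(|V(H)|)$ suitable candidate roots (the edges incident to the parent cut vertex, or all edges if $H=H_r$), compute the node labels and the split-off trees, build the matrices $M_\rho$~--~of total size $O(|V|\,\ell(\mathcal C))$ by the bound in Section~\ref{SEC:c-connected-SPQR}~--~and test the consecutive-ones property in linear time~\cite{booth/lueker:1976}; if it holds, fix one consecutive-ones ordering, reconstruct the corresponding embedding of $H$ exactly as in the proof of Theorem~\ref{THM:2-connected}, and decide by a single scan of the faces around each child cut vertex whether the required free faces exist. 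Here one uses Lemma~\ref{LEMMA:free_vertex}: because the choice of $C_{\text{ext}}$ places every child cut vertex on the outer face of $H[C]$ for each relevant $C$, the verdict of the freeness test is the same for every consecutive-ones ordering~--~and, more generally, for every suitable non-inappropriate root~--~so checking a single ordering of a single root suffices.

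The one place that needs care is the decomposition argument: a literal reading of the previous theorem quantifies over a simultaneous choice of a suitable SPQR-root for \emph{all} blocks, which would make the search exponential; what rescues polynomiality is that, after fixing the BC-root, Conditions~1 and~2 are conjunctions of block-local conditions, and that the freeness test in Condition~2 is insensitive both to the particular consecutive-ones ordering and to the particular suitable root, which is exactly what Lemma~\ref{LEMMA:free_vertex} provides. All remaining ingredients~--~planarity testing, BC- and SPQR-tree construction, the labelings, the matrix construction, and the consecutive-ones test~--~are linear or near-linear in their inputs, and the nested loops over connected components, root blocks, and candidate SPQR-roots add only polynomial overhead, so the overall running time is polynomial in $|V|$ and $|\mathcal C|$.
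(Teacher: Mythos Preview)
Your proposal is correct and follows the same approach as the paper, namely reading the characterization in the preceding theorem as a polynomial-time algorithm; the paper in fact gives no further argument beyond the single line ``The characterization in the previous theorem immediately yields the following corollary.'' Your write-up simply makes explicit what the paper leaves implicit---in particular the block-locality of Conditions~1 and~2 once the BC-root is fixed, and the appeal to Lemma~\ref{LEMMA:free_vertex} to ensure that the freeness test in Condition~2 is independent of the particular consecutive-ones ordering (and suitable SPQR-root) chosen, the very argument the paper itself uses inside the proof of the theorem.
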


\end{document}